\newcommand{\reffigure}[1]{Figure~\ref{#1}}
\newcommand{\refsection}[1]{Section~\ref{#1}}
\newcommand{\reftable}[1]{Table~\ref{#1}}
\newcommand{\refappendix}[1]{Appendix}
\newcommand{\zap}[1]{}
\newcommand{\btree}{B$^+$-tree\xspace}
\newcommand{\btrees}{B$^+$-trees\xspace}
\newtheorem{theorem}{Theorem}
\newcommand{\longonly}[1]{\empty{#1}}
\newcommand{\shortonly}[1]{\empty{}}
\newcommand{\shortlong}[2]{\empty{#2}}
\newcommand{\scale}{1}
\begin{document}
	
\title{On Performance Stability in LSM-based Storage Systems (Extended Version)}

\numberofauthors{2}

\author{
	\alignauthor
	Chen Luo\\
	\affaddr{University of California, Irvine}\\
	\email{cluo8@uci.edu}
	% 2nd. author
	\alignauthor
	Michael J. Carey\\
	\affaddr{University of California, Irvine}\\
	\email{mjcarey@ics.uci.edu}\\
}
\maketitle

\begin{abstract}
The Log-Structured Merge-Tree (LSM-tree) has been widely adopted for use in modern NoSQL systems for its superior write performance.
Despite the popularity of LSM-trees, they have been criticized for suffering from write stalls and large performance variances
due to the inherent mismatch between their fast in-memory writes and slow background I/O operations.
In this paper, we use a simple yet effective two-phase experimental approach to
evaluate write stalls for various LSM-tree designs.
We further identify and explore the design choices of LSM merge schedulers to minimize write stalls given an {I/O} bandwidth budget.
We have conducted extensive experiments in the context of the Apache AsterixDB system and we present the results here. 
\end{abstract}

\section{Introduction}
In recent years, the Log-Structured Merge-Tree (LSM-tree)~\cite{lsm1996,lsm-survey} has been
widely used in modern key-value stores and NoSQL systems~\cite{cassandra, hbase, leveldb, rocksdb, asterixdb2014}.
Different from traditional index structures, such as \btrees, which apply updates in-place, an LSM-tree always buffers writes into memory.
When memory is full, writes are flushed to disk and subsequently merged using sequential I/Os.
To improve efficiency and minimize blocking, flushes and merges are often performed asynchronously in the background.

Despite their popularity,
LSM-trees have been criticized for suffering from write stalls and large performance variances~\cite{compaction-stall2017, blsm2012, ldc2019}.
To illustrate this problem, we conducted a micro-experiment on RocksDB~\cite{rocksdb},
a state-of-the-art LSM-based key-value store, to evaluate its write throughput on SSDs using the YCSB benchmark~\cite{ycsb2010}.
The instantaneous write throughput over time is depicted in \reffigure{fig:rocksdb-stall}.
As one can see, the write throughput of RocksDB periodically slows down after the first 300 seconds, which is when
the system has to wait for background merges to catch up.
Write stalls can significantly impact percentile write latencies and
must be minimized to improve the end-user experience or to meet strict service-level agreements~\cite{perf-predictability2018}.

\begin{figure}
	\centering
	\includegraphics[width=0.8\linewidth]{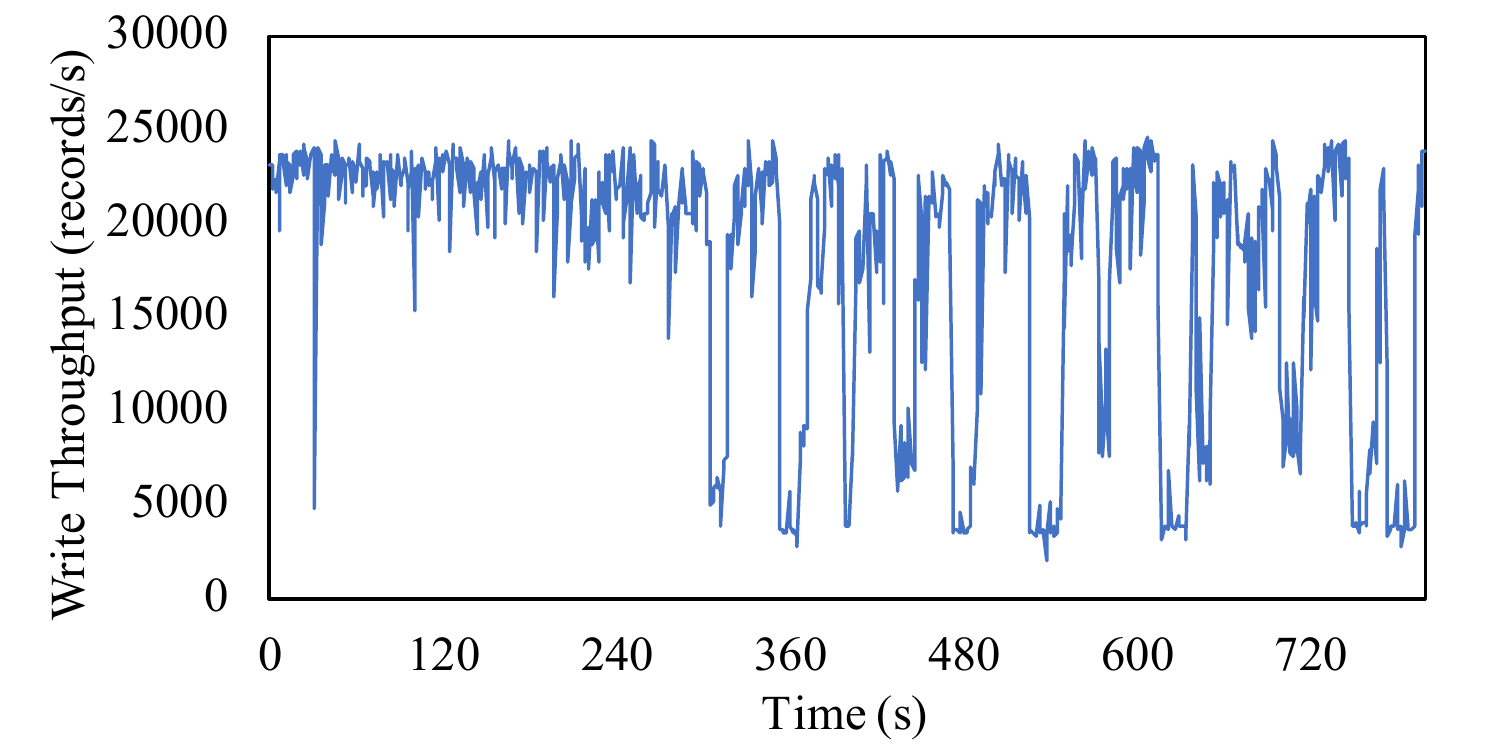}
	\vspace{-0.1in}
	\caption{Instantaneous write throughput of RocksDB: writes are periodically stalled to wait for lagging merges}
	\label{fig:rocksdb-stall}
\end{figure}

In this paper, we study the impact of write stalls and how to minimize write stalls for various LSM-tree designs.
It should first be noted that some write stalls are inevitable.
Due to the inherent mismatch between fast in-memory writes and slower background I/O operations,
in-memory writes must be slowed down or stopped if background flushes or merges cannot catch up.
Without such a flow control mechanism, the system will eventually run out of memory (due to slow flushes) or disk space (due to slow merges).
Thus, it is not a surprise that an LSM-tree can exhibit large write stalls if one measures its maximum write throughput
by writing as quickly data as possible, such as we did in \reffigure{fig:rocksdb-stall}.

This inevitability of write stalls does not necessarily limit the applicability of
LSM-trees since in practice writes do not arrive as quickly as possible, but rather are controlled by the expected data arrival rate.
The data arrival rate directly impacts the write stall behavior and resulting write latencies of an LSM-tree.
If the data arrival rate is relatively low, then write stalls are unlikely to happen.
However, it is also desirable to maximize the supported data arrival rate so that the system's resources can be fully utilized.
Moreover, the expected data arrival rate is subject to an important constraint - it must be smaller than the processing capacity of the target LSM-tree.
Otherwise, the LSM-tree will never be able to process writes as they arrive, causing infinite write latencies.
Thus, to evaluate the write stalls of an LSM-tree, the first step is to choose a proper data arrival rate.

As the first contribution, we propose a simple yet effective approach to evaluate the write stalls of various LSM-tree designs
by answering the following question: If we set the data arrival rate close to (e.g., 95\% of) the maximum write throughput of an LSM-tree,
will that cause write stalls?
In other words, can a given LSM-tree design provide both a high write throughput and a low write latency?
Briefly, the proposed approach consists of two phases: a \emph{testing} phase and a \emph{running} phase.
During the testing phase, we experimentally measure the maximum write throughput of an LSM-tree by simply writing as much data as possible.
During the running phase, we then set the data arrival rate close to the measured maximum write throughput as the limiting data arrival rate to evaluate its write stall behavior based on write latencies.
If write stalls happen, the measured write throughput is not \emph{sustainable} since it cannot be used in the long-term due to the large latencies.
However, if write stalls do not happen, then write stalls are no longer a problem since
the given LSM-tree can provide a high write throughput with small performance variance.

Although this approach seems to be straightforward at first glance, there exist two challenges that must be addressed.
First, how can we accurately measure the maximum sustainable write rate of an LSM-tree experimentally?
Second, how can we best schedule LSM I/O operations so as to minimize write stalls at runtime?
In the remainder of this paper, we will see that the merge scheduler of an LSM-tree can have a large impact on write stalls.
As the second contribution, we identify and explore the design choices for LSM merge schedulers
and present a new merge scheduler to address these two challenges.

As the paper's final contribution, we have implemented the proposed techniques and various LSM-tree designs inside Apache AsterixDB~\cite{asterixdb2014}.
This enabled us to carry out extensive experiments to evaluate the write stalls of LSM-trees and the effectiveness of the proposed techniques
using our two-phase evaluation approach.
We argue that with proper tuning and configuration,
LSM-trees can achieve both a high write throughput and small performance variance.

\shortonly{
The remainder of this paper is organized as follows:
\refsection{sec:background} provides background on LSM-trees and discusses related work.
\refsection{sec:experimental-setup} describes the general experimental setup used throughout this paper.
\refsection{sec:lsm-merge-scheduler} identifies the scheduling choices for LSM-trees
and experimentally evaluates bLSM's merge scheduler~\cite{blsm2012}.
Sections \ref{sec:full-merge} and \ref{sec:partitioning} present our techniques for minimizing write stalls for various LSM-tree designs.
\refsection{sec:summary} summarize the important lessons and insights from our evaluation.
Finally, \refsection{sec:conclusion} concludes the paper.
An extended version of this paper~\cite{lsm-stability-extend} further extends our evaluation to the size-tiered merge policy used in practical systems
and LSM-based secondary indexes.
}

\longonly{
The remainder of this paper is organized as follows:
\refsection{sec:background} provides background information on LSM-trees and briefly surveys related work.
\refsection{sec:experimental-setup} describes the general experimental setup used throughout this paper.
\refsection{sec:lsm-merge-scheduler} identifies the design choices for LSM merge schedulers
and evaluates bLSM's spring-and-gear scheduler~\cite{blsm2012}.
Sections \ref{sec:full-merge} and \ref{sec:partitioning} present our techniques for minimizing write stalls 
for full merges and partitioned merges respectively.
\refsection{sec:summary} summarizes the lessons and insights from our evaluation.
Finally, \refsection{sec:conclusion} concludes the paper.
}

\section{Background}
\label{sec:background}
\subsection{Log-Structured Merge Trees}
\label{sec:background-lsm}
The LSM-tree~\cite{lsm1996} is a persistent index structure optimized for write-intensive workloads.
In an LSM-tree, writes are first buffered into a memory component.
An insert or update simply adds a new entry with the same key,
while a delete adds an anti-matter entry indicating that a key has been deleted.
When the memory component is full, it is flushed to disk to form a new disk component, within which entries are ordered by keys.
Once flushed, LSM disk components are immutable.

A query over an LSM-tree has to reconcile the entries with identical keys from multiple components, as entries from newer components override those from older components.
A point lookup query simply searches all components from newest to oldest until the first match is found.
A range query searches all components simultaneously using a priority queue to perform reconciliation.
To speed up point lookups, a common optimization is to build Bloom filters~\cite{bloom-filter1970} over the sets of keys stored in disk components.
If a Bloom filter reports that a key does not exist, then that disk component can be excluded from searching.
As disk components accumulate, query performance tends to degrade since more components must be examined.
To counter this, smaller disk components are gradually merged into larger ones.
This is implemented by scanning old disk components to create a new disk component with unique entries.
The decision of what disk components to merge is made by a pre-defined \emph{merge policy}, which is discussed below.

\textbf{Merge Policy.}
Two types of LSM merge policies are commonly used in practice~\cite{lsm-survey}, both of which organize components into ``levels''.
The leveling merge policy (\reffigure{fig:merge-policy}a) maintains one component per level,
and a component at Level $i+1$ will be $T$ times larger than that of Level $i$.
As a result, the component at Level $i$ will be merged multiple times with the component from Level $i-1$ until it fills up
and is then merged into Level $i+1$.
In contrast, the tiering merge policy (\reffigure{fig:merge-policy}b) maintains multiple components per level.
When a Level $i$ becomes full with $T$ components, these $T$ components are merged together into a new component at Level $i+1$.
In both merge policies, $T$ is called the \emph{size ratio}, as it controls the maximum capacity of each level.
%It should be noted that when {Level} $i$ is already the system's configured maximum level, the resulting component remains at {Level} $i$.
We will refer to both of these merge policies as \emph{full merges} since components are merged entirely.

\begin{figure}
	\centering
	\includegraphics[width=\linewidth]{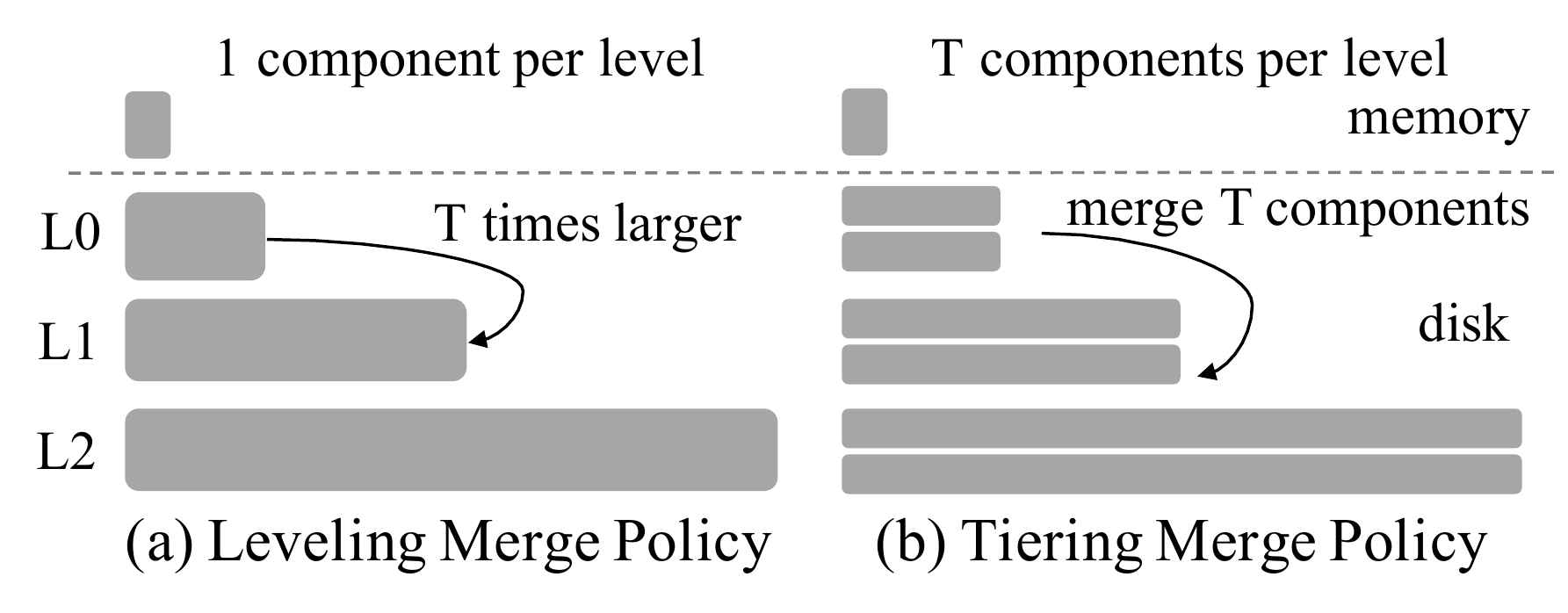}
	\vspace{-0.25in}
	\caption{LSM-tree Merge Policies}
	\label{fig:merge-policy}
\end{figure}

In general, the leveling merge policy optimizes for query performance by minimizing the number of components but at the cost of write performance.
This design also maximizes space efficiency, which measures the amount of space used for storing obsolete entries,
by having most of the entries at the largest level.
In contrast, the tiering merge policy is more write-optimized by reducing the merge frequency,
but this leads to lower query performance and space utilization.

\textbf{Partitioning.}
Partitioning is a commonly used optimization in modern LSM-based key-value stores
that is often implemented together with the leveling merge policy, as pioneered by LevelDB~\cite{leveldb}.
In this optimization, a large LSM disk component is range-partitioned into multiple (often fixed-size) {files}.
This bounds the processing time and the temporary space of each merge.
An example of a partitioned LSM-tree {with the leveling merge policy}
is shown in \reffigure{fig:partitioned-lsm}, where each {file} is labeled with its key range.
Note that partitioning starts from {Level} 1, as components in {Level} 0 are directly flushed from memory.
To merge a {file} from {Level} $i$ to {Level} $i+1$, all of its overlapping {files} at {Level} $i+1$ are selected and
these {files} are merged to form new {files} at {Level} $i+1$.
For example in \reffigure{fig:partitioned-lsm}, the {file} labeled 0-50 at {Level} 1 will be merged
with the files labeled 0-20 and 22-52 at {Level} 2,
which produce new files labeled 0-15, 17-30, and 32-52 at {Level} 2.
To select which file to merge next, LevelDB uses a round-robin policy.

\hyphenation{Wired-Tiger}

Both full merges and partitioned merges are widely used in existing systems.
Full merges are used in AsterixDB~\cite{asterixdb-web}, Cassandra~\cite{cassandra},  HBase~\cite{hbase}, ScyllaDB~\cite{sylladb},
Tarantool~\cite{tarantool}, and WiredTiger (MongoDB)~\cite{wiredtiger}.
Partitioned merges are used in LevelDB~\cite{leveldb}, RocksDB~\cite{rocksdb}, and X-Engine~\cite{xengine}.

\begin{figure}
	\centering
	\includegraphics[width=\linewidth]{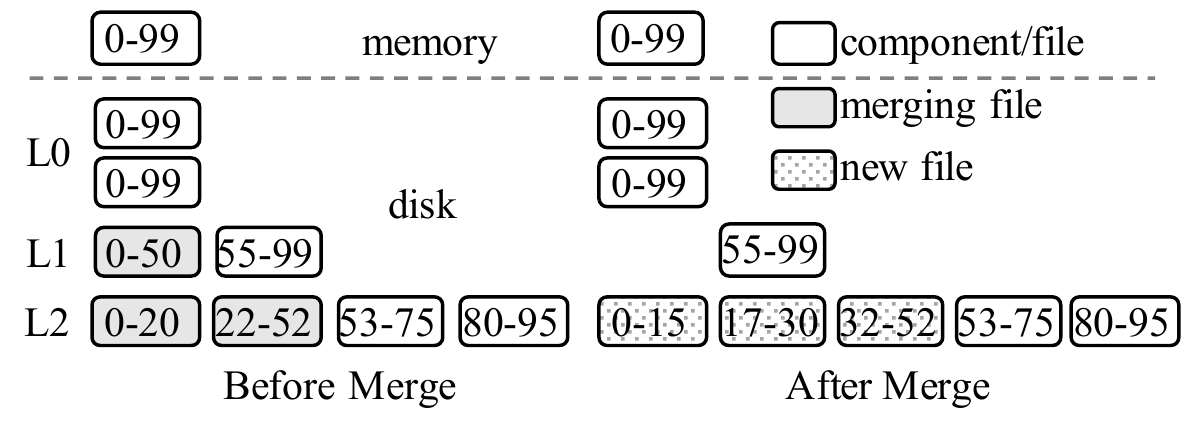}
	\vspace{-0.25in}
	\caption{{Partitioned LSM-tree with Leveling Merge Policy}}
	\label{fig:partitioned-lsm}
\end{figure}

\textbf{Write Stalls in LSM-trees.}
{Since in-memory writes are inherently faster than background I/O operations,
writing to memory components sometimes must be stalled to ensure the stability of an LSM-tree,
which, however, will negatively impact write latencies.
This is often referred to as the \emph{write stall} problem.}
If the incoming write speed is faster than the flush speed, writes will be stalled when all memory components are full.
Similarly, if there are too many disk components, new writes should be stalled as well.
In general, merges are the major source of write stalls since writes are flushed once but merged multiple times.
Moreover, flush stalls can be avoided by giving higher I/O priority to flushes.
In this paper, we thus focus on write stalls caused by merges.

\subsection{Apache AsterixDB}
\label{sec:asterixdb}
Apache AsterixDB~\cite{asterixdb-web, asterixdb2014,asterixdb2019} is a parallel, semi-structured Big Data Management System (BDMS)
that aims to manage large amounts of data efficiently.
It supports a feed-based framework for efficient data ingestion~\cite{asterixdb-feed2015,idea2019}.
The records of a dataset in AsterixDB are hash-partitioned based on their primary keys across multiple nodes of a shared-nothing cluster;
{thus, a range query has to search all nodes.}
Each partition of a dataset uses a primary LSM-based \btree index to store the data records,
while local secondary indexes, including LSM-based \btrees, R-trees, and inverted indexes, can be built to expedite query processing.
{AsterixDB internally uses a variation of the tiering merge policy to manage disk components, similar to the one used in existing systems~\cite{hbase,rocksdb}.
Instead of organizing components into levels explicitly as in \reffigure{fig:merge-policy}b, AsterixDB's variation simply
schedules merges based on the sizes of disk components.
In this work, we do not focus on the LSM-tree implementation in AsterixDB
but instead use AsterixDB as a common testbed to evaluate various LSM-tree designs.}

\subsection{Related Work}
\textbf{LSM-trees.}
Recently, a large number of improvements of the original LSM-tree~\cite{lsm1996} have been proposed.
\cite{lsm-survey} surveys these improvements, {ranging} from improving write performance~\cite{triad2017,dostoevsky2018,lsm-bush,hashkv2019,wisckey2017,sifrdb2018,pebblesdb2017,lwc-tree2017},
optimizing memory management~\cite{compaction2015,flodb2017,lsbm2017,elastic-bf2018},
supporting automatic tuning of LSM-trees~\cite{monkey2017,monkey-tods,lsm-model2016},
optimizing LSM-based secondary indexes~\cite{lsm-storage2019,secondary2018},
to extending the applicability of LSM-trees~\cite{umzi2019,slimdb2017}.
However, all of these efforts have largely ignored performance variances and write stalls of LSM-trees.

{Several LSM-tree implementations seek to bound the write processing latency
to alleviate the negative impact of write stalls~\cite{leveldb, rocksdb, fd-tree2010,ldc2019}.}
bLSM~\cite{blsm2012} proposes a spring-and-gear merge scheduler to avoid write stalls.
As shown in \reffigure{fig:blsm}, bLSM has one memory component, $C_0$, and two disk components, $C_1$ and $C_2$.
The memory component $C_0$ is continuously flushed and merged with $C_1$.
When $C_1$ becomes full, a new $C_1$ component is created while the old $C_1$, which now becomes $C'_1$, will be merged with $C_2$.
bLSM ensures that for each {Level} $i$, the progress of merging $C'_i$ into $C_{i+1}$ (denoted as ``$out_i$'')
will be roughly identical to the progress of the formation of a new $C_i$ (denoted as ``$in_i$'').
This eventually limits the write rate for the memory component ($in_0$) and avoids blocking writes.
{ However, we will see later that simply bounding the maximum write processing latency alone is insufficient, because a large variance in the processing rate can still cause large queuing delays for subsequent writes.
}

\begin{figure}
	\centering
	\includegraphics[width=1\linewidth]{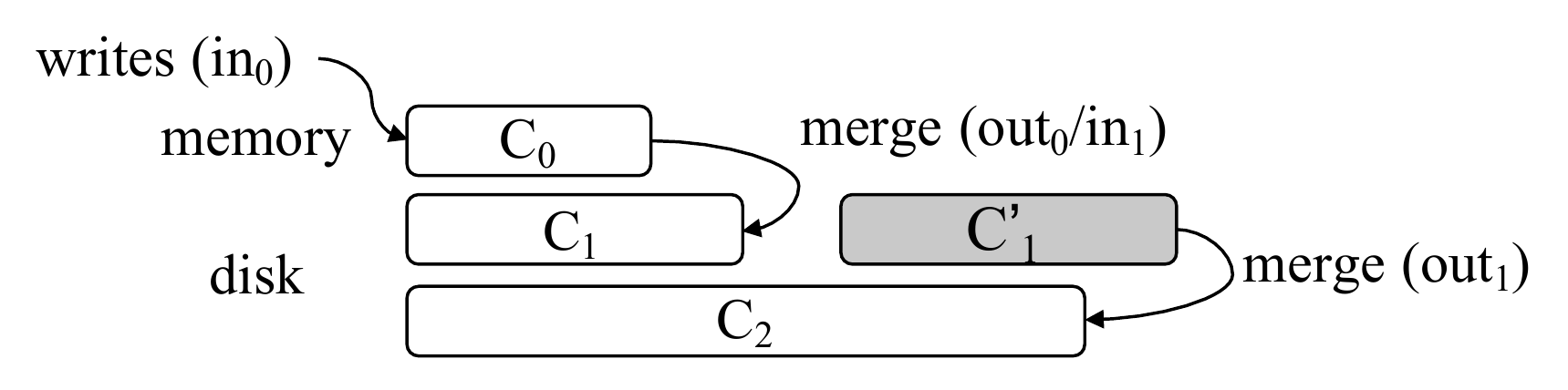}
	\vspace{-0.2in}
	\caption{bLSM's Spring-and-Gear Merge Scheduler}
	\label{fig:blsm}
\end{figure}

\textbf{Performance Stability.}
Performance stability has long been recognized as a critical performance metric.
The TPC-C benchmark~\cite{tpcc} measures not only absolute throughput,
but also specifies the acceptable upper bounds for the percentile latencies.
Huang et al.~\cite{perf-predictability2018} applied VProfiler~\cite{semantic-profiling2017}
to identify major sources of variance in database transactions.
Various techniques have been proposed to optimize the variance of query processing~\cite{piql2011,scale-indep2013,join-predictable2009,param-query-var2010,blink2009, perf-predictable2009}.
Cao et al.~\cite{storage-var2017} found that variance is common in storage stacks and heavily depends on configurations and workloads.
Dean and Barroso~\cite{google-latency} discussed several engineering techniques to
reduce performance variances at Google.
Different from these efforts, in this work we focus on the performance variances of LSM-trees
due to their inherent out-of-place update design.

\section{Experimental Methodology}
\label{sec:experimental-setup}
For ease of presentation, we will mix our techniques with a detailed performance analysis for each LSM-tree design.
We now describe the general experimental setup and methodology for all experiments to follow.

\subsection{Experimental Setup}
All experiments were run on a single node with an 8-core Intel i7-7567U 3.5GHZ CPU,
16 GB of memory, a 500GB SSD, and a 1TB 7200 rpm hard disk.
We used the SSD for LSM storage and configured the hard disk for transaction logging due to its sufficiently high sequential throughput.
We allocated 10GB of memory for the AsterixDB instance.
Within that allocation, the {buffer cache} size was set at 2GB.
Each LSM memory component had a 128MB budget, and each LSM-tree had two memory components to minimize stalls during flushes.
Each disk component had a Bloom filter with a false positive rate setting of 1\%.
The data page size was set at 4KB to {align with the SSD page size}.

It is important to note that not all sources of performance variance can be eliminated~\cite{perf-predictability2018}.
For example, writing a key-value pair with a 1MB value inherently requires more work than writing one that only has 1KB.
Moreover, short time periods with quickly occurring writes (workload bursts)
will be much more likely to cause write stalls than a long period of slow writes,
even though their long-term write rate may be the same.
{In this paper, we will focus on the \textit{avoidable variance}~\cite{perf-predictability2018}
caused by the internal implementation of LSM-trees instead of variances in the workloads.}

To evaluate the internal variances of LSM-trees,
we adopt YCSB~\cite{ycsb2010} as the basis for our experimental workload.
{Instead of using the pre-defined YCSB workloads, we designed our own workloads to better study the performance stability of LSM-trees.
Each experiment first loads an LSM-tree with 100 million records, in random key order, where each record has size 1KB.
It then runs for 2 hours to update the previously loaded LSM-tree.
This ensures that the measured write throughput of an LSM-tree is stable over time.
Unless otherwise noted, we used one writer thread for writing data to the LSM memory components.
We evaluated two update workloads, where the updated keys follow either a uniform or Zipf distribution.
The specific workload setups will be discussed in the subsequent sections.}

We used two commonly used I/O optimizations when implementing LSM-trees,
namely I/O throttling and periodic disk forces.
In all experiments, we throttled the {SSD} write speed of all LSM flush and merge operations to 100MB/s.
{This was implemented by using a rate limiter to inject artificial sleeps into SSD writes.
This mechanism bounds the negative impact of the SSD writes on query performance
and allows us to more fairly compare the performance differences of various LSM merge schedulers}.
We further had each flush or merge operation force its SSD writes after each 16MB of data.
This helps to limit the OS I/O queue length, {reducing the negative impact of SSD writes on queries.}
{We have verified that disabling this optimization would not impact the performance trends of writes;
however, large forces at the end of each flush and merge operation, which are required for durability,
can significantly interfere with queries.}

\subsection{Performance Metrics}
To quantify the impact of write stalls, we will not only present the write throughput of LSM-trees but also their write latencies.
However, there are different models for measuring write latencies.
Throughout the paper, we will use \emph{arrival rate} to denote the rate at which writes are submitted by clients,
\emph{processing rate} to denote the rate at which writes can be processed by an LSM-tree,
and \emph{write throughput} to denote the number of writes processed by an LSM-tree per unit of time.
The difference between the write throughput and arrival/processing rates is discussed further below.

The bLSM paper~\cite{blsm2012}, as well as most of the existing LSM research, used the experimental setup depicted in \reffigure{fig:queuing}a
to write as much data as possible and measure the latency of each write.
In this \emph{closed system} setup~\cite{queuing2013},
the processing rate essentially controls the arrival rate, which further equals the write throughput.
Although this model is sufficient for measuring the maximum write throughput of LSM-trees,
it is not suitable for characterizing their write latencies for several reasons.
First, writing to memory is inherently faster than background I/Os,
so an LSM-tree will always have to stall writes in order to wait for lagged flushes and merges.
Moreover, under this model, a client cannot submit its next write until its current write is completed.
Thus, when the LSM-tree is stalled, only a small number of ongoing writes will actually experience a large latency
since the remaining writes have not been submitted yet\footnote{The original release of the YCSB benchmark~\cite{ycsb2010}
	mistakenly used this model; this was corrected later in 2015~\cite{ycsb-bug}.}.

In practice, a DBMS generally cannot control how quickly writes are submitted by external clients,
nor will their writes always arrive as fast as possible.
Instead, the arrival rate is usually independent from the processing rate,
and when the system is not able to process writes as fast as they arrive, the newly arriving writes must be temporarily queued.
In such an \emph{open system} (\reffigure{fig:queuing}b), the measured write latency includes both the queuing latency and processing latency.
Moreover, an important constraint is that the arrival rate must be smaller than the processing rate since otherwise the queue length will be unbounded.
Thus, the (overall) write throughput is actually determined by the arrival rate.

\begin{figure}
	\centering
	\includegraphics[width=1\linewidth]{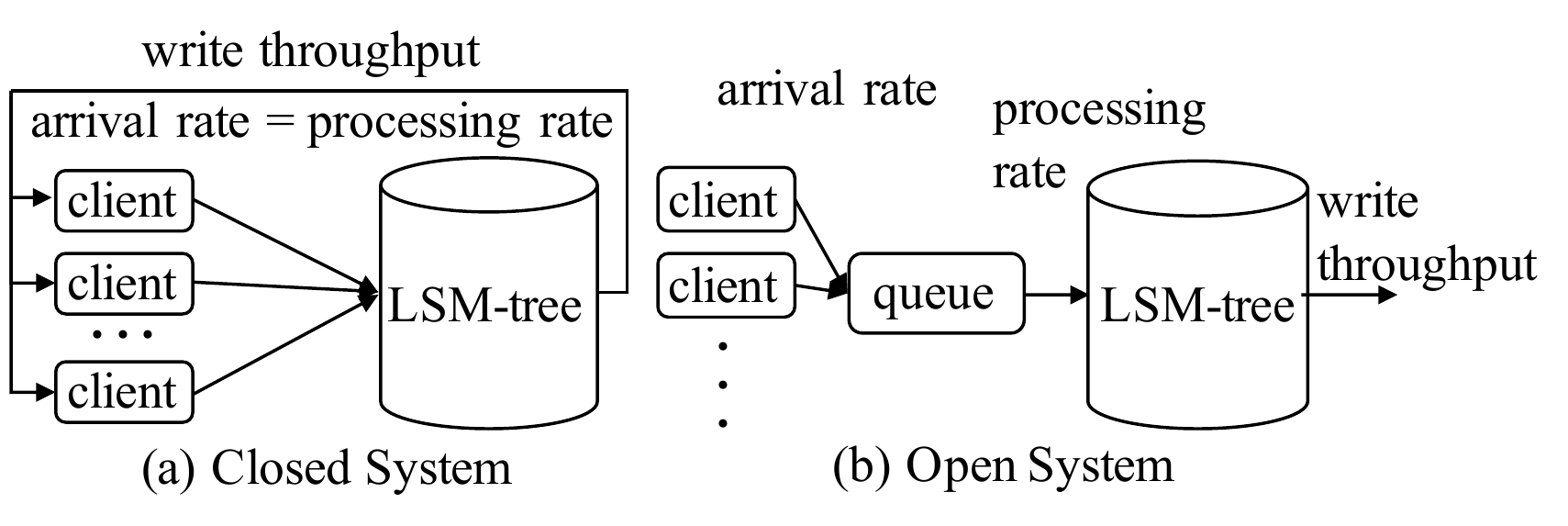}
	\vspace{-0.25in}
	\caption{Models for Measuring Write Latency}
	\label{fig:queuing}
\end{figure}

A simple example will illustrate the important difference between these two models.
Suppose that 5 clients are used to generate an intended arrival rate of 1000 writes/s
and that the LSM-tree stalls for 1 second.
Under the closed system model (\reffigure{fig:queuing}a), only 5 delayed writes will experience a write latency of 1s since the remaining
(intended) 995 writes simply will not occur.
However, under the open system model (\reffigure{fig:queuing}b), all 1000 writes will be queued and their average latency will be at least 0.5s.

To evaluate write latencies in an open system, one must first set the arrival rate properly
since the write latency heavily depends on the arrival rate.
It is also important to maximize the arrival rate to maximize the system's utilization.
For these reasons, we propose a two-phase evaluation approach with a \emph{testing} phase and a \emph{running} phase.
During the \emph{testing} phase, we use the closed system model (\reffigure{fig:queuing}a) to measure the maximum write throughput of an LSM-tree, which is also its processing rate.
When measuring the maximum write throughput,
we excluded the initial 20-minute period (out of 2 hours) of the testing phase
since {the initially loaded LSM-tree has a relatively small number of disk components at first}.
During the \emph{running} phase, we use the open system model (\reffigure{fig:queuing}b)
to evaluate the \emph{write latencies} under a constant arrival rate set at 95\% of the measured maximum write throughput.
Based on queuing theory~\cite{queuing2013}, the queuing time approaches infinity when the utilization,
which is the ratio between the arrival rate and the processing rate, approaches 100\%.
We thus empirically determine a high utilization load (95\%) while leaving some room for the system to absorb variance.
If the running phase then reports large write latencies,
the maximum write throughput as determined in the testing phase is not sustainable;
we must improve the implementation of the LSM-tree or reduce the expected arrival rate to reduce the latencies.
In contrast, if the measured write latency is small, then the given LSM-tree can provide
a high write throughput with a small performance variance.

\begin{figure*}
	\centering
	\begin{subfigure}[t]{.3\textwidth}
		\centering
		\includegraphics[width=\scale\linewidth]{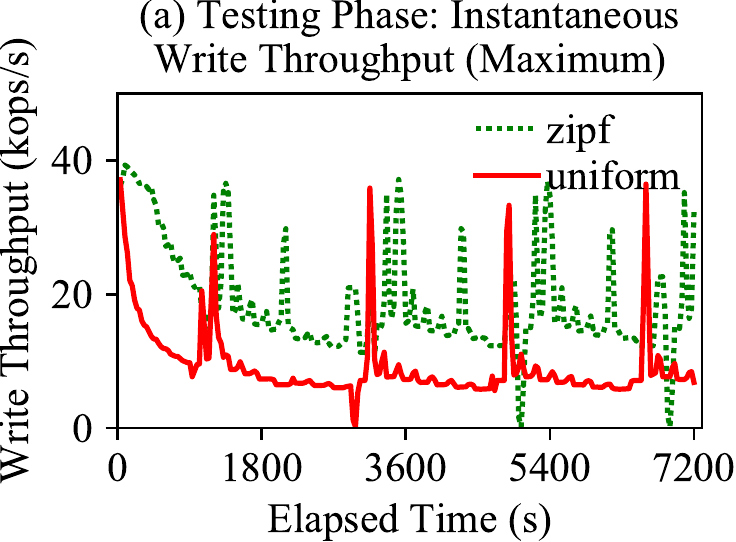}
		%\caption{Testing Phase: Instantaneous Write Throughput (Maximum)}
	%	\label{fig:expr-blsm-max}
	\end{subfigure}
	\hfil
	\begin{subfigure}[t]{.3\textwidth}
		\centering
		\includegraphics[width=\scale\linewidth]{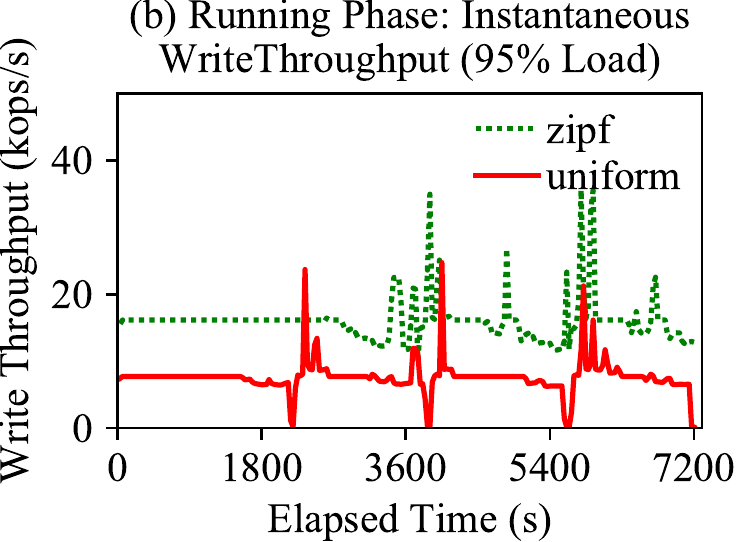}
	%	\caption{Running Phase: Instantaneous Write Throughput (95\% Load)}
	%	\label{fig:expr-blsm-open}
	\end{subfigure}
	\hfil
	\begin{subfigure}[t]{.3\textwidth}
		\centering
		\includegraphics[width=\scale\linewidth]{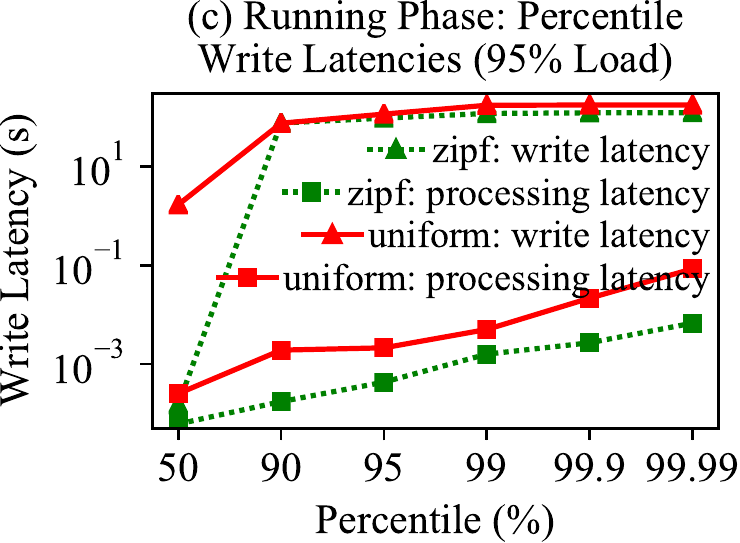}
		%\caption{Running Phase: Percentile Write Latencies (95\% Load)}
	%	\label{fig:expr-blsm-latency}
	\end{subfigure}
	\vspace{-0.075in}
	\caption{Two-Phase Evaluation of bLSM}
	\label{fig:expr-blsm}
\end{figure*}

\section{LSM Merge Scheduler}
\label{sec:lsm-merge-scheduler}
Different from a merge policy, which decides which components to merge, {a \emph{merge scheduler} is responsible
for executing the merge operations created by the merge policy.}
In this section, we discuss the design choices for a merge scheduler and evaluate
bLSM's spring-and-gear merge scheduler.

\subsection{Scheduling Choices}
{The write cost of an LSM-tree, which is the number of I/Os per write, is determined by the LSM-tree design itself and the workload characteristics but not by how merges are executed~\cite{monkey2017}.
Thus, a merge scheduler will have little impact on the overall write throughput of an LSM-tree as long as the allocated I/O bandwidth budget can be fully utilized.}
However, different scheduling choices can significantly impact the write stalls of an LSM-tree,
and merge schedulers must be carefully designed to minimize write stalls.
We have identified the following design choices for a merge scheduler.

\textbf{Component Constraint:} A merge scheduler usually specifies
an upper-bound constraint on the total number of components allowed to accumulate before incoming writes to the LSM memory components should be stalled.
We call this the \emph{component constraint}.
For example, bLSM~\cite{blsm2012} allows at most two disk components per level,
while other systems like HBase~\cite{hbase} or Cassandra~\cite{cassandra} specify the total number of disk components across all levels.

\textbf{Interaction with Writes:} There exist different strategies to enforce a given component constraint.
One strategy is to {simply} stop processing writes once the component constraint is violated.
Alternatively, the processing of writes can be degraded gracefully based on the merge pressure~\cite{blsm2012}.

\textbf{Degree of Concurrency:} In general, an LSM-tree can often create multiple merge operations in the same time period.
A merge scheduler should decide how these merge operations should be scheduled.
Allowing concurrent merges will enable merges at multiple levels to proceed concurrently,
but they will also compete for CPU and I/O resources, which can negatively impact query performance~\cite{compaction2015}.
As two examples, bLSM~\cite{blsm2012} allows one merge operation per level,
while LevelDB~\cite{leveldb} uses just one single background thread to execute all merges one by one.

\textbf{I/O Bandwidth Allocation:} Given multiple concurrent merge operations, the merge scheduler should further
decide how to allocate the available {I/O bandwidth} among these merge operations. 
A commonly used heuristic is to allocate {I/O bandwidth} ``fairly" (evenly) to all active merge operations.
Alternatively, bLSM~\cite{blsm2012} allocates {I/O bandwidth} based on the relative progress of the merge operations to ensure that
merges at each level all make steady progress.

\subsection{Evaluation of bLSM}
\label{sec:blsm}
Due to the implementation complexity of bLSM and its dependency on a particular storage system, Stasis~\cite{stasis2006},
we chose to directly evaluate the released version of bLSM~\cite{blsmgithub}.
bLSM uses the leveling merge policy with two on-disk levels.
We set its memory component size to 1GB and size ratio to 10
so that the experimental dataset with 100 million records can fit into the last level.
We used 8 write threads to maximize the write throughput of bLSM.

\textbf{Testing Phase.}
During the testing phase, we measured the maximum write throughput of bLSM by writing as much data as possible
using both the uniform and Zipf update workloads.
The instantaneous write throughput of bLSM under these two workloads is shown in \reffigure{fig:expr-blsm}a.
For readability, the write throughput is averaged over 30-second windows.
(Unless otherwise noted, the same aggregation applies to all later experiments as well.)

{Even though bLSM's merge scheduler prevents writes from being stalled,
the instantaneous write throughput still exhibits a large variance with regular temporary peaks.
Recall that bLSM uses the merge progress at each level to control its in-memory write speed.
After the component $C_1$ is full and becomes $C'_1$, the original $C_1$ will be empty and will have much shorter merge times.
This will temporarily increase the in-memory write speed of bLSM, which then quickly drops as $C_1$ grows larger and larger.}
Moreover, the Zipf update workload increases the write throughput only because updated entries can be reclaimed earlier,
but the overall variance performance trends are still the same.

\textbf{Running Phase.}
Based on the maximum write throughput measured in the testing phase,
we then used a constant data arrival process (95\% of the maximum) in the running phase to evaluate bLSM's behavior.
\reffigure{fig:expr-blsm}b shows the instantaneous write throughput of bLSM under the uniform and Zipf update workloads.
bLSM maintains a sustained write throughput during the initial period of the experiment, but later has to slow down its in-memory write rate periodically due to
background merge pressure.
\reffigure{fig:expr-blsm}c further shows the resulting percentile write and processing latencies.
{The processing latency measures only the time for the LSM-tree to process a write,
while the write latency includes both the write's queuing time and processing time.}
By slowing down the in-memory write rate, bLSM indeed bounds the processing latency.
However, the write latency is much larger because writes must be queued when they cannot be processed immediately.
This suggests that simply bounding the maximum processing latency is far from sufficient;
it is important to minimize the variance in an LSM-tree's processing rate to minimize write latencies.

\section{Full Merges}
\label{sec:full-merge}
{In this section, we explore the scheduling choices of LSM-trees with full merges
	and then evaluate the impact of merge scheduling on write stalls using our two-phase approach.}
\shortlong{}{Finally, we examine other variations of the tiering merge policy that are used in practical systems.}

\subsection{Merge Scheduling for Full Merges}
We first introduce some useful notation for use throughout our analysis in \reftable{table:terms}.
To simplify the analysis, we will ignore the I/O cost of flushes since merges consume most of the I/O bandwidth.

\begin{table}
	\caption{List of notation used in this paper}
	\label{table:terms}
	\centering
	{
	\begin{tabular}{|c|l|l|}
	\hline
	\textbf{Term} & \textbf{Definition} & \textbf{Unit} \\
	\hline		$T$  & size ratio of the merge policy & \\
	\hline 		$L$  & the number of levels in an LSM-tree & \\
 	\hline		$M$ & memory component size & entries \\
	\hline		$B$ & I/O bandwidth & entries/s\\
	\hline		$\mu$ & write arrival rate & entries/s \\
	\hline		$W$ & write throughput of an LSM-tree & entries/s\\
	\hline
	\end{tabular}
}
\end{table}

\subsubsection{Component Constraint}
\label{sec:component-constraint}
To provide acceptable query performance and space utilization, the total number of disk components of an LSM-tree must be bounded.
We call this upper bound the \emph{component constraint}, and it can be enforced either \emph{locally} or \emph{globally}.
A local constraint specifies the maximum number of disk components per level.
For example, bLSM~\cite{blsm2012} uses a local constraint to allow at most two components per level.
A global constraint instead specifies the maximum number of disk components across all levels.
Here we argue that global component constraints will better minimize write stalls.
In addition to external factors, such as deletes or shifts in write patterns,
the merge time at each level inherently varies for leveling since the size of the component at
{Level} $i$ varies from 0 to $(T-1)\cdot M \cdot T^{i-1}$.
Because of this, bLSM cannot provide a high yet stable write throughput over time.
Global component constraints will better absorb this variance and minimize the write stalls .

It remains a question how to determine the maximum number of disk components for the component constraint.
In general, tolerating more disk components will increase the LSM-tree's ability to reduce write stalls and absorb write bursts,
but it will decrease query performance and space utilization.
Given the negative impact of stalls on write latencies, one solution is to 
tolerate a sufficient number of disk components to avoid write stalls while the worst-case query performance and space utilization are still bounded.
For example, one conservative constraint would be to tolerate twice the expected number of disk components,
e.g., $2 \cdot L$ components for leveling and $2\cdot T\cdot L$ components for tiering.

\subsubsection{Interaction with Writes}
\label{sec:full-merge-interact-writes}
When the component constraint is violated, the processing of writes by an LSM-tree has to be slowed down or stopped.
Existing LSM-tree implementations~\cite{leveldb, rocksdb, blsm2012} prefer to gracefully slow down the in-memory write rate
by adding delays to some writes.
This approach reduces the maximum processing latency, as large pauses are broken down into many smaller ones,
but the overall processing rate of an LSM-tree, which depends on the I/O cost of each write, is not affected.
Moreover, this approach will result in an even larger queuing latency.
There may be additional considerations for gracefully slowing down writes, but we argue that processing writes as quickly as possible minimizes the overall write latency, as stated by the following theorem.~\shortlong{See~\cite{lsm-stability-extend} for detailed proofs for all theorems.}{See the \refappendix{appendix:proof} for proofs for all theorems.}

\textsc{Theorem 1.} \textit{Given any data arrival process and any LSM-tree, processing writes as quickly as possible minimizes the latency of each write.}

\textsc{Proof Sketch.} Consider two merge schedulers $S$ and $S'$ which only differ in that $S$ may add arbitrary delays to writes while $S'$ processes writes as quickly as possible. For each write request $r$, $r$ must be completed by $S'$ no later than $S$ because the LSM-tree has the same processing rate but $S$ adds some delays to writes.

It should be noted that Theorem 1 only considers write latencies.
By processing writes as quickly as possible, disk components can stack up more quickly (up to the component constraint), which may negatively impact query performance.
Thus, a better approach may be to increase the write processing rate, e.g., by changing the structure of the LSM-tree. We leave the exploration of this direction as future work.

\subsubsection{Degree of Concurrency}
\label{sec:degree-concurrency}
{A merge policy can often create multiple merge operations simultaneously.
For full merges, we can show that a single-threaded scheduler that executes one merge at a time is not sufficient for minimizing write stalls.}
Consider a merge operation at {Level} $i$.
For leveling, the merge time varies from 0 to $\frac{M \cdot T^i}{B}$ because the size of the component at {Level} $i$ varies from
0 to $(T-1)\cdot M \cdot T^{i-1}$.
For tiering, each component has size $M\cdot T^{i-1}$ and merging $T$ components thus takes time $\frac{M \cdot T^i}{B}$.
Suppose the arrival rate is $\mu$.
Without concurrent merges, there would be $\frac{\mu}{M}\cdot \frac{M \cdot T^i}{B} = \frac{\mu \cdot T^i}{B}$
newly flushed components added while this merge operation is being executed, assuming that flushes can still proceed.

Our two-phase evaluation approach chooses the maximum write throughput of an LSM-tree as the arrival rate $\mu$.
For leveling, the maximum write throughput is approximately
$W_{level} = \frac{2\cdot B}{T \cdot L}$, as each entry is merged $\frac{T}{2}$ times per level.
For tiering, the maximum write throughput
is approximately $W_{tier} = \frac{B}{L}$, as each entry is merged only once per level.
By substituting $W_{level}$ and $W_{tier}$ for $\mu$,
one needs to tolerate at least $\frac{2 \cdot T^{i-1}}{L}$ flushed components for leveling
and $\frac{T^i}{L}$ flushed components for tiering to avoid write stalls.
{Since the term $T^i$ grows exponentially, a large number of flushed components will have to be tolerated when a large disk component is being merged.
Consider the leveling merge policy with a size ratio of $10$.
To merge a disk component at {Level} $5$, approximately $\frac{2\cdot 10^4}{5} = 4000$ flushed components
would need to be tolerated, which is highly unacceptable.}

{Clearly, concurrent merges must be performed to minimize write stalls.}
When a large merge is being processed, smaller merges can still be completed to reduce the number of components.
By the definition of the tiering and leveling merge policies, there can be at most one active merge operation per level.
Thus, given an LSM-tree with $L$ levels, at most $L$ merge operations can be scheduled concurrently.

\subsubsection{I/O Bandwidth Allocation}
\label{sec:bandwidth-allocation}
Given multiple active merge operations, the merge scheduler must {further} decide how to allocate {I/O bandwidth} to these operations.
A heuristic used by existing systems~\cite{cassandra,hbase,rocksdb} is to allocate {I/O bandwidth} fairly (evenly) to all ongoing merges.
We call this the \emph{fair} scheduler.
The fair scheduler ensures that all merges at different levels can proceed, thus eliminating potential starvation.
{Recall that write stalls occur when an LSM-tree has too many disk components, thus violating the component constraint.
It is unclear whether or not the fair scheduler can minimize write stalls by minimizing the number of disk components over time.
}

Recall that both the leveling and tiering merge policies always merge the same number of disk components at once.
We {propose a novel} \emph{greedy} scheduler that always allocates the full {I/O bandwidth}
to the merge operation with the smallest remaining number of bytes.
The greedy scheduler has a useful property that it minimizes the number of disk components over time for a given set of merge operations.

\textsc{Theorem 2.} \textit{Given any set of merge operations that process the same number of disk components and any {I/O bandwidth} budget,
the greedy scheduler minimizes the number of disk components at any time instant.}

{
\textsc{Proof Sketch.} Consider an arbitrary scheduler $S$ and the greedy scheduler $S'$. Given $N$ merge operations, we can show that $S'$ always completes the $i$-th ($1\le i\le N$) merge operation no later than $S$. This can be done by noting that $S'$ always processes the smallest merge operation first.
}

Theorem 2 only considers a set of statically created merge operations.
This conclusion may not hold in general because sometimes completing a large merge may enable the merge policy to create smaller merges,
which can then reduce the number of disk components more quickly.
Because of this, there actually exists no merge scheduler that
can always minimize the number of disk components over time, as stated by the following theorem.
However, as we will see in our later evaluation, the greedy scheduler is still a very effective heuristic for minimizing write stalls.

\textsc{Theorem 3.} \textit{Given any {I/O bandwidth} budget, no merge scheduler can minimize the number of disk components at any time instant for any data arrival process and any LSM-tree for a deterministic merge policy where all merge operations process the same number of disk components.}

{
\textsc{Proof Sketch.} Consider an LSM-tree that has created a small merge $M_S$ and a large merge $M_L$.
Completing $M_L$ allows the LSM-tree to create a new small merge $M'_S$ that is smaller than $M_S$.
Consider two merge schedulers $S_1$ and $S_2$, where $S_1$ first processes $M_S$ and then $M_L$,
and $S_2$ first processes $M_L$ and then $M'_S$.
It can be shown that $S_1$ has the earliest completion time for the first merge and $S_2$ has the earliest completion time for the second merge,
but no merge scheduler can outperform both $S_1$ and $S_2$.
}

\subsubsection{Putting Everything Together}
\label{sec:greedy-scheduler}
{Based on the discussion of each scheduling choice, we now summarize the proposed greedy scheduler.
The greedy scheduler enforces a global component constraint with a sufficient number of disk components, e.g., twice the expected number of components of an LSM-tree, to minimize write stalls while ensuring the stability of the LSM-tree.
It processes writes as quickly as possible and only stops the processing of writes when the component constraint is violated.
The greedy scheduler performs concurrent merges but allocates the full I/O bandwidth to the merge operation with the smallest remaining bytes. \shortonly{Whenever a merge operation is created or completed, the greedy scheduler is notified to find the smallest merge to execute next. Thus, a large merge can be interrupted by a newly created smaller merge. However, in general one cannot exactly know which merge operation requires the least amount of {I/O bandwidth} until the new component is fully produced. To handle this, the smallest merge operation can be approximated by using the number of remaining pages of the merging components.}}

\longonly{
The pseudocode for the greedy scheduling algorithm is shown in \reffigure{alg:greedy-scheduler}.
It stores the list of scheduled merge operations in \emph{mergeOps}.
At any time, there is at most one merge operation being executed by the merge thread, which is denoted by \emph{activeOp}.
The merge policy calls \textsc{ScheduleMerge} when a new merge operation is scheduled,
and the merge thread calls \textsc{CompleteMerge} when a merge operation is completed.
In both functions, \emph{mergeOps} is updated accordingly and the merge scheduler is notified to check
whether a new merge operation needs to be executed.
It should be noted that in general one cannot exactly know which merge operation requires the least amount of {I/O bandwidth}
until the new component has been fully produced.
Thus, line 12 uses the number of remaining input pages as an approximation to determine the smallest merge operation.
Finally, if the newly selected merge operation is inactive, i.e., not being executed,
the scheduler pauses the previous active merge operation and activates the new one.
}

\longonly{
\begin{figure}
	\centering
	\small
	\begin{algorithmic}[1]
		\State{mergeOps $\gets$ the list of scheduled merge operations}
		\State{activeOp $\gets$ the active merge operation}
		
		\Function{ScheduleMerge}{newOp}
			\State mergeOps.\Call{Add}{newOp}
			\State notify GreedyScheduler
		\EndFunction
		\Function{CompleteMerge}{}
			\State mergeOps.\Call{Remove}{activeOp}
			\State activeOp $\gets$ NULL
			\State notify GreedyScheduler
		\EndFunction
		\Function{GreedyScheduler}{}
			\While{mergeOps changes}
				\State newOp $\gets$ find the merge operation with the fewest remaining input pages in $mergeOps$
				\If{newOp $\ne$ NULL AND newOp $\ne$ activeOp}
					\State pause activeOp
					\State resume newOp
					\State activeOp $\gets$ newOp
				\EndIf
		    \EndWhile
		\EndFunction
	\end{algorithmic}
	%	\vspace{-0.1in}
	\caption{Pseudocode for Greedy Scheduling Algorithm}
	\label{alg:greedy-scheduler}
\end{figure}
}

{Under the greedy scheduler, larger merges may be starved at times since they receive lower priority.
This has a few implications.
First, during normal user workloads, such starvation can only occur if the arrival rate is temporarily faster than the processing rate of an LSM-tree.
Given the negative impact of write stalls on write latencies, it can actually be beneficial to temporarily delay large merges
so that the system can better absorb write bursts.
Second, the greedy scheduler should not be used in the testing phase
because it would report a higher but unsustainable write throughput due to such starved large merges.
}

Finally, our discussions of the greedy scheduler as well as the single-threaded scheduler
are based on an important assumption that a single merge operation is able to fully utilize the {available I/O bandwidth} budget.
Otherwise, multiple merges must be executed at the same time.
It is straightforward to extend the greedy scheduler to execute the smallest $k$ merge operations,
where $k$ is the degree of concurrency needed to fully utilize the {I/O bandwidth} budget.

\subsection{Experimental Evaluation}
We now experimentally evaluate the write stalls of LSM-trees using our two-phase approach.
{We discuss the specific experimental setup followed by the detailed evaluation,
including the impact of merge schedulers on write stalls,
the benefit of enforcing the component constraint globally and of processing writes as quickly as possible,
and the impact of merge scheduling on query performance.}

\subsubsection{Experimental Setup}
{All experiments in this section were performed using AsterixDB
with the general setup described in \refsection{sec:experimental-setup}.}
Unless otherwise noted, the size ratio of leveling was set at 10,
which is a commonly used configuration in practice~\cite{leveldb,rocksdb}.
For the experimental dataset with 100 million unique records, this results in a three-level LSM-tree,
where the last level is nearly full.
{For tiering, the size ratio was set at 3, which leads to better write performance than leveling without sacrificing too much on query performance. This ratio results in an eight-level LSM-tree.}

%Specifically, the largest component should be able to store all unique records of the dataset.
%The optimal size ratio $T$ can be computed by minimizing the expected number of disk components $T\times log_T\frac{N}{M}$,
%where $N$ is the number of unique entries and $M$ is the memory component size.
%This results in an eight-level LSM-tree.
%The maximum number of disk components is set at twice the expected number of disk components for each merge policy,
%which is $2\times 3 = 6$ for leveling and $2\times 3 \times 8 = 48$ for tiering.

\begin{figure}
	\centering
	\begin{minipage}[t]{.3\textwidth}
		\centering
		\includegraphics[width=\linewidth]{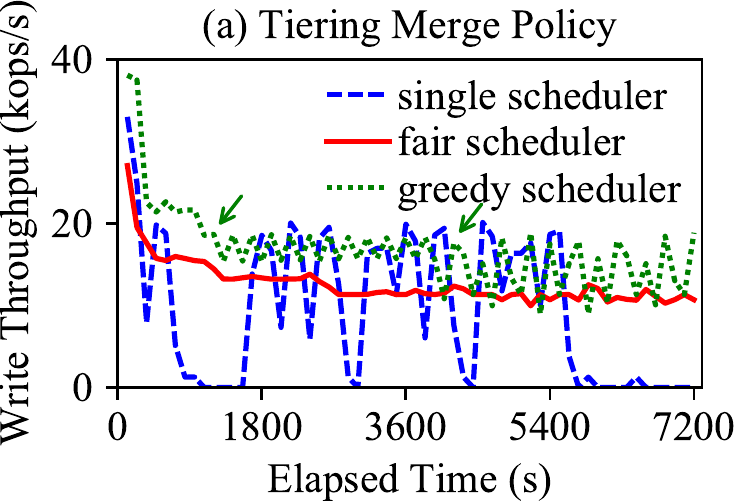}
		\vspace{-0.1in}
	\end{minipage}
	\hfil
	\begin{minipage}[t]{.3\textwidth}
		\centering
		\includegraphics[width=\linewidth]{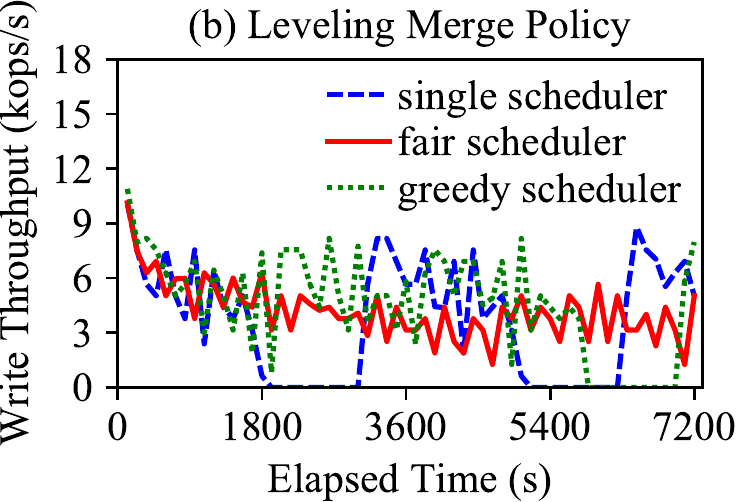}
	\end{minipage}
	\vspace{-0.05in}
	\caption{Testing Phase: Instantaneous Write Throughput}
	\label{fig:expr-max}
\end{figure}

\begin{figure*}[h]
	\centering
	\begin{minipage}[t]{.3\textwidth}
		\centering
		\includegraphics[width=\scale\linewidth]{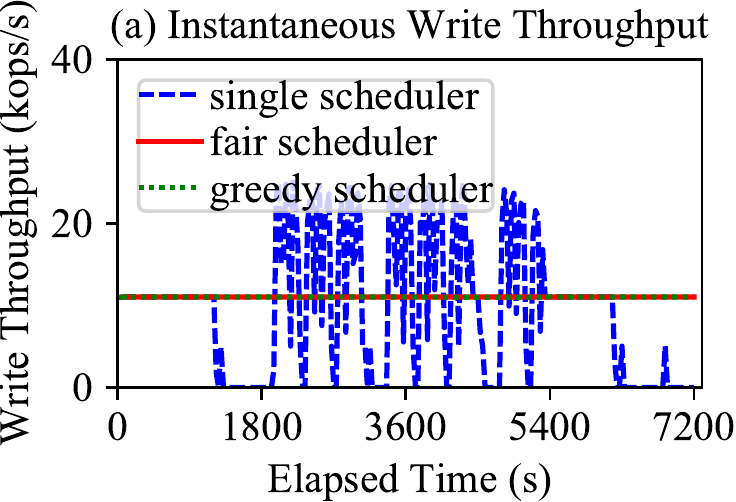}
		%\caption{Instantaneous Write Throughput}
		%\label{fig:expr-tier-write-open}
	\end{minipage}
	\hfil
	\begin{minipage}[t]{.3\textwidth}
		\centering
		\includegraphics[width=\scale\linewidth]{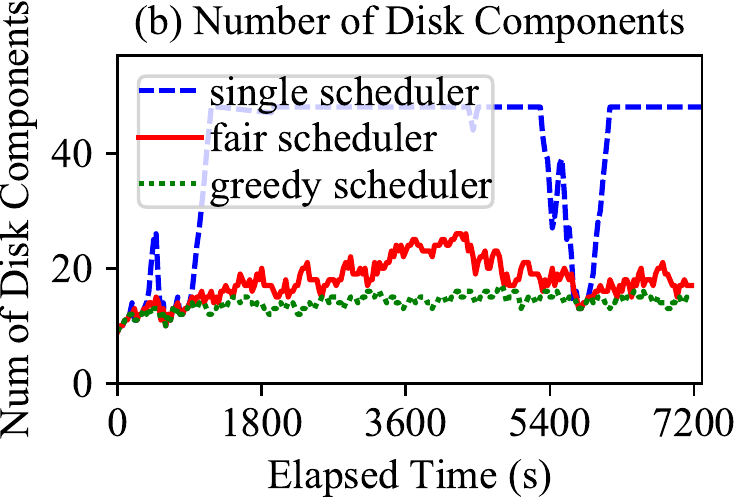}
		%\caption{Number of Disk Components}
		%\label{fig:expr-tier-components}
	\end{minipage}
	\hfil
	\begin{minipage}[t]{.3\textwidth}
		\centering
		\includegraphics[width=\scale\linewidth]{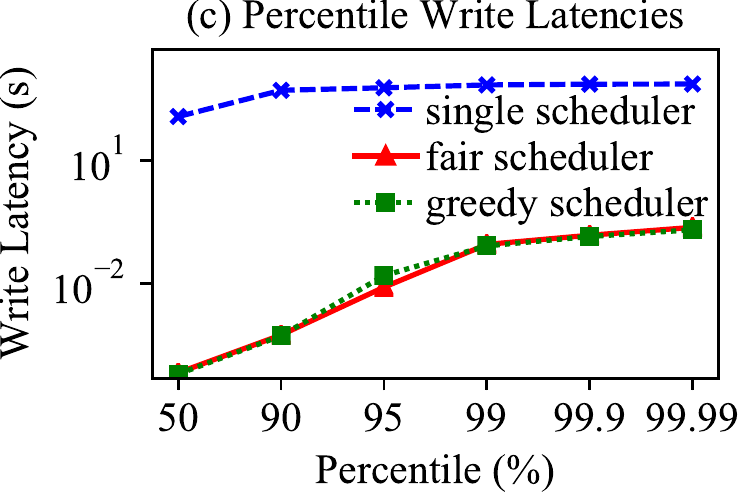}
		%\caption{Percentile Write Latencies}
		%\label{fig:expr-tier-latency}
	\end{minipage}%
	\vspace{-0.05in}
	\caption{Running Phase of Tiering Merge Policy (95\% Load)}
	\label{fig:expr-tier-open}
\end{figure*}

\begin{figure*}
	\centering
	\begin{minipage}[t]{.325\textwidth}
		\centering
		\includegraphics[width=\scale\linewidth]{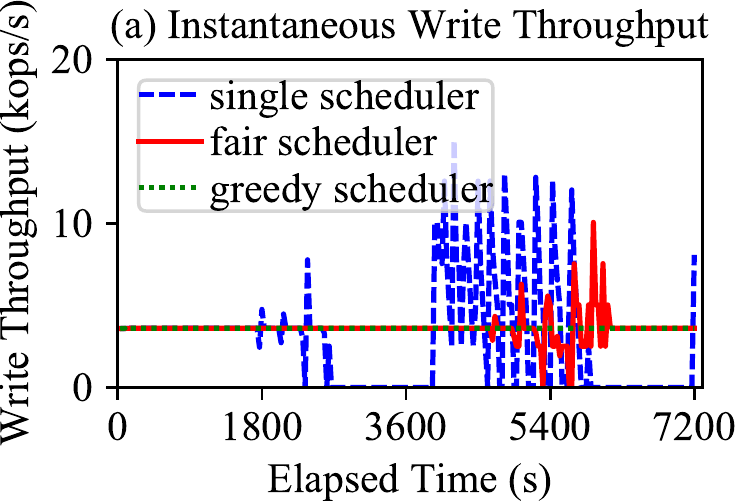}
		%\caption{Instantaneous Write Throughput}
		%\label{fig:expr-level-write-open}
	\end{minipage}
	\hfil
	\begin{minipage}[t]{.325\textwidth}
		\centering
		\includegraphics[width=\scale\linewidth]{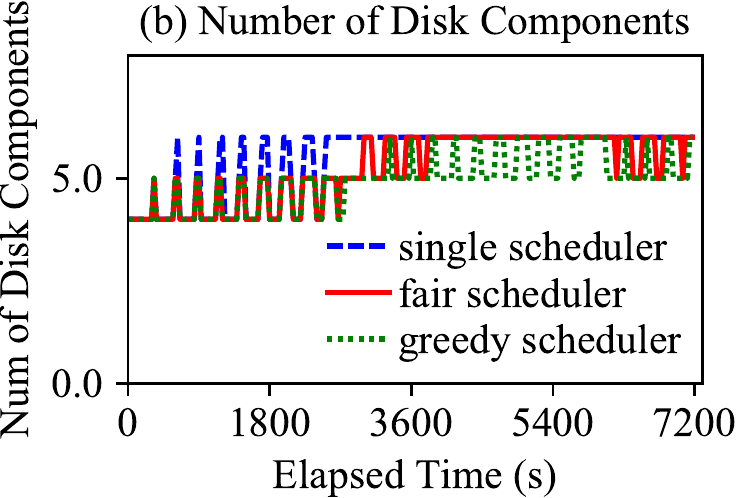}
		%\caption{Number of Disk Components}
		%\label{fig:expr-level-components}
	\end{minipage}
	\hfil
	\begin{minipage}[t]{.325\textwidth}
		\centering
		\includegraphics[width=\scale\linewidth]{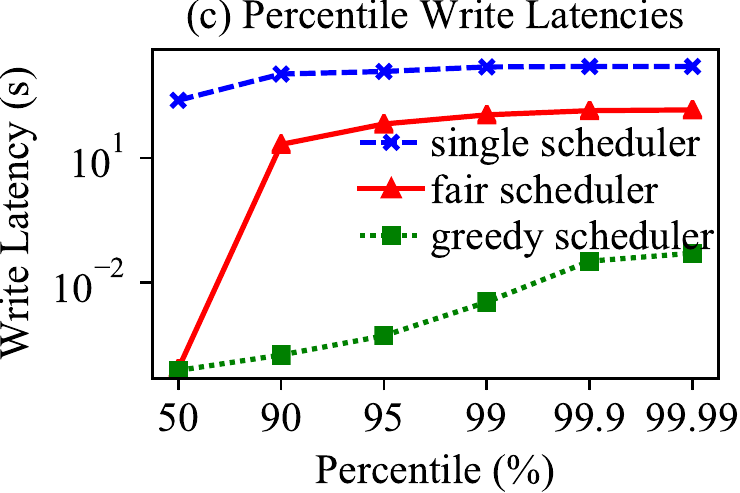}
		%\caption{Percentile Write Latencies}
		%\label{fig:expr-level-latency}
	\end{minipage}
	\vspace{-0.05in}
	\caption{Running Phase of Leveling Merge Policy (95\% Load)}
	\label{fig:expr-level-open}
\end{figure*}

{
We evaluated the single-threaded scheduler (\refsection{sec:degree-concurrency}), the fair scheduler (\refsection{sec:degree-concurrency}), and the proposed greedy scheduler (\refsection{sec:greedy-scheduler}).
The single-threaded scheduler only executes one merge at a time using a single thread.
Both the fair and greedy schedulers are concurrent schedulers that execute each merge using a separate thread.
The difference is that the fair scheduler allocates the I/O bandwidth to all ongoing merges evenly,
while the greedy scheduler always allocates the full I/O bandwidth to the smallest merge.
To minimize flush stalls, a flush operation is always executed in a separate thread and receives higher I/O priority.
Unless otherwise noted, all three schedulers enforce global component constraints and process writes as quickly as possible.
The maximum number of disk components is set at twice the expected number of disk components for each merge policy.}
Each experiment was performed under both the uniform and Zipf update workloads.
Since the Zipf update workload had little impact on the overall performance trends, except that it led to higher write throughput, its experiment results are omitted here for brevity.

\subsubsection{Testing Phase}
During the testing phase, we measured the maximum write throughput of an LSM-tree by writing as much data as possible.
In general, alternative merge schedulers have little impact on the maximum write throughput since the {I/O bandwidth} budget is fixed,
but their measured write throughput may be different due to the finite experimental period.

Figures \ref{fig:expr-max}a and \ref{fig:expr-max}b
shows the instantaneous write throughout of LSM-trees using different merge schedulers for tiering and leveling.
Under both merge policies, the single-threaded scheduler regularly exhibits long pauses, making its write throughput vary over time.
The fair scheduler exhibits a relatively stable write throughput over time since all merge levels can proceed at the same rate.
With leveling, its write throughput still varies slightly over time since the component size at each level varies.
The greedy scheduler appears to achieve a higher write throughput than the fair scheduler by starving large merges.
However, this higher write throughput eventually drops when no small merges can be scheduled.
For example, the write throughput with tiering drops slightly at 1100s and 4000s,
and there is a long pause from 6000s to 7000s with leveling.
This result confirms that the fair scheduler is more suitable for testing the maximum write throughput of an LSM-tree,
as merges at all levels can proceed at the same rate.
In contrast, the single-threaded scheduler incurs many long pauses, causing a large variance in the measured write throughput.
The greedy scheduler provides a higher write throughput by starving large merges, which would be undesirable at runtime.

\subsubsection{Running Phase}
{Turning to the running phase, we used a constant data arrival process, configured based on 95\% of the maximum write throughput measured by the fair scheduler, to evaluate the write stalls of LSM-trees.}

\textbf{LSM-trees can provide a stable write throughput.}
We first evaluated whether LSM-trees with different merge schedulers can support a high write throughput with low write latencies.
For each experiment, we measured the instantaneous write throughput and the number of disk components over time
as well as percentile write latencies.

The results for tiering are shown in \reffigure{fig:expr-tier-open}.
Both the fair and greedy schedulers are able to provide stable write throughputs
and the total number of disk components never reaches the configured threshold.
The greedy scheduler also minimizes the number of disk components over time.
The single-threaded scheduler, however, causes a large number of write stalls due to the blocking of large merges,
which confirms our previous analysis.
Because of this, the single-threaded scheduler incurs large percentile write latencies.
In contrast, both the fair and greedy schedulers provide small write latencies because of their stable write throughput.
\reffigure{fig:expr-level-open} shows the corresponding results for leveling.
The single-threaded scheduler again performs poorly, causing a lot of stalls and thus large write latencies.
Due to the inherent variance of merge times, the fair scheduler alone cannot {provide} a stable write throughput;
this results in relatively large write latencies.
In contrast, the greedy scheduler avoids write stalls by always minimizing the number of components, which results in small write latencies.

This experiment confirms that LSM-trees can achieve a stable write throughput with a relatively small performance variance.
Moreover, the write stalls of an LSM-tree heavily depend on the design of the merge scheduler.

\textbf{Impact of Size Ratio.}
To verify our findings on LSM-trees with different shapes, we further carried out a set of experiments by varying the size ratio from 2 to 10 for
both tiering and leveling.
For leveling, we applied the dynamic level size optimization~\cite{rocksdb-space2017} so that the largest level remains almost full by slightly modifying the size ratio between {Levels} 0 and 1.
This optimization maximizes space utilization without impacting write or query performance.

During the testing phase, we measured the maximum write throughput for each LSM-tree configuration using the fair scheduler,
which is shown in \reffigure{fig:expr-size-ratio}a.
In general, a larger size ratio increases write throughput for tiering but decreases write throughput for leveling
because it decreases the merge frequency of tiering but increases that of leveling.
During the running phase, we evaluated the 99\% percentile write latency for each LSM-tree configuration using constant data arrivals,
which is shown in \reffigure{fig:expr-size-ratio}b.
With tiering, both the fair and greedy schedulers are able to provide a stable write throughput with small write latencies.
With leveling, the fair scheduler causes large write latencies when the size ratio becomes larger, as we have seen before.
In contrast, the greedy scheduler is always able to provide a stable write throughput along with small write latencies.
This again confirms that LSM-trees, despite their size ratios, can provide a high write throughput with a small variance
with an appropriately chosen merge scheduler.

\begin{figure}
	\centering
	\begin{subfigure}[t]{.227\textwidth}
		\centering
		\includegraphics[width=\linewidth]{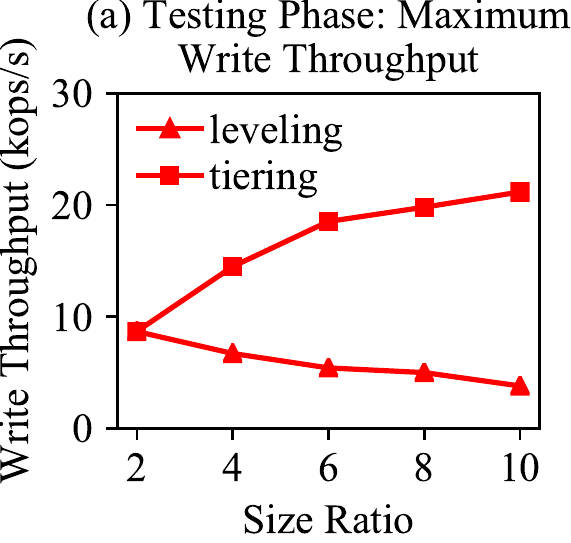}
%		\caption{Testing Phase: Maximum Write Throughput}
%		\label{fig:expr-size-ratio-max}
	\end{subfigure}
	\hfil
	\begin{subfigure}[t]{.243\textwidth}
		\centering
		\includegraphics[width=\linewidth]{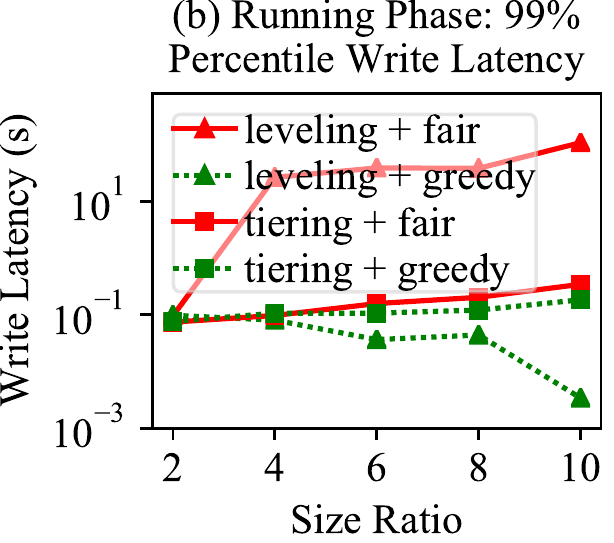}
%		\caption{Running Phase: 99\% Percentile Write Latency}
%		\label{fig:expr-size-ratio-latency}
	\end{subfigure}
	\vspace{-0.2in}
	\caption{Impact of Size Ratio on Write Stalls}
	\label{fig:expr-size-ratio}
\end{figure}

\begin{figure}
	\centering
	\begin{subfigure}[t]{.235\textwidth}
		\centering
		\includegraphics[width=\linewidth]{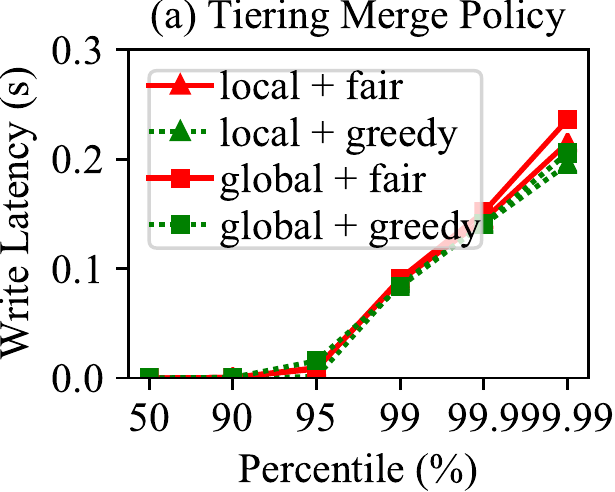}
	\end{subfigure}
	\hfil
	\begin{subfigure}[t]{.235\textwidth}
		\centering
		\includegraphics[width=\linewidth]{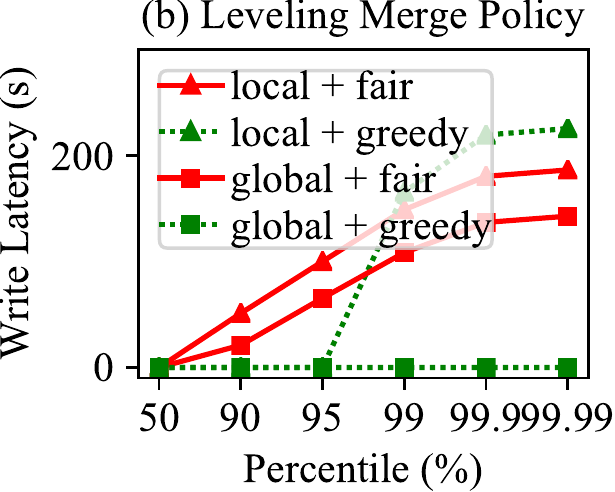}
	\end{subfigure}
	\vspace{-0.2in}
	\caption{{Impact of Enforcing Component Constraints on Percentile Write Latencies}}
	\label{fig:expr-component-constraint}
\end{figure}

\begin{figure}
	\centering
	\begin{subfigure}[t]{.235\textwidth}
		\centering
		\includegraphics[width=\linewidth]{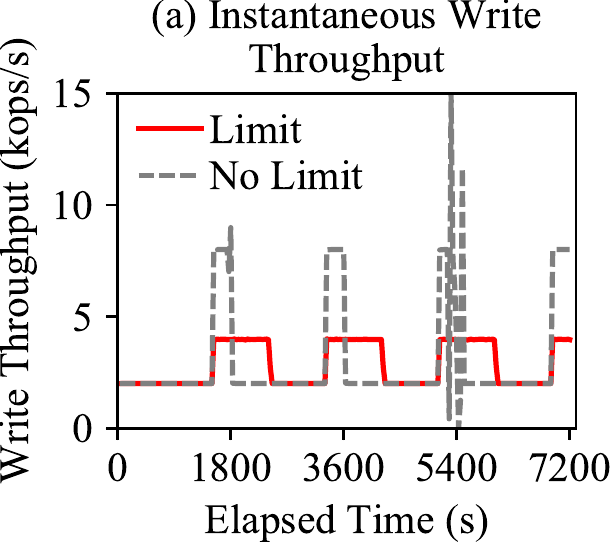}
	\end{subfigure}
	\hfil
	\begin{subfigure}[t]{.235\textwidth}
		\centering
		\includegraphics[width=\linewidth]{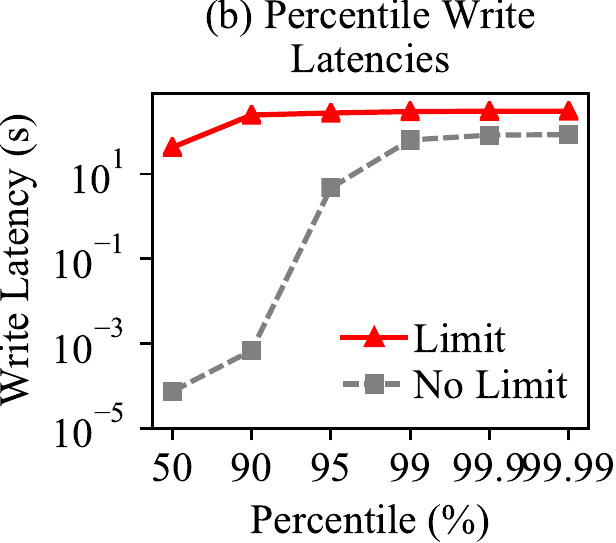}
	\end{subfigure}
	\vspace{-0.2in}
	\caption{Running Phase with Burst Data Arrivals}
	\label{fig:expr-level-burst}
\end{figure}

\textbf{Benefit of Global Component Constraints.}
{We next evaluated the benefit of global component constraints in terms of minimizing write stalls.
We additionally included a variation of the fair and greedy schedulers that enforces local component constraints, that is,
2 components per level for leveling and $2\cdot T$ components per level for tiering.}

The resulting write latencies are shown in \reffigure{fig:expr-component-constraint}.
In general, local component constraints have little impact on tiering since its merge time per level is relatively stable.
However, the resulting write latencies for leveling become much large due to the inherent variance of its merge times.
{Moreover, local component constraints have a larger negative impact on the greedy scheduler.
The greedy scheduler prefers small merges, which may not be able to complete due to possible violations of the constraint at the next level.
This in turn causes longer stalls and thus larger percentile write latencies.}
In contrast, global component constraints better absorb these variances, reducing the write latencies.

\shortonly{
\begin{figure*}[!h]
	\centering
	\begin{minipage}[t]{.3\textwidth}
		\centering
		\includegraphics[width=\scale\linewidth]{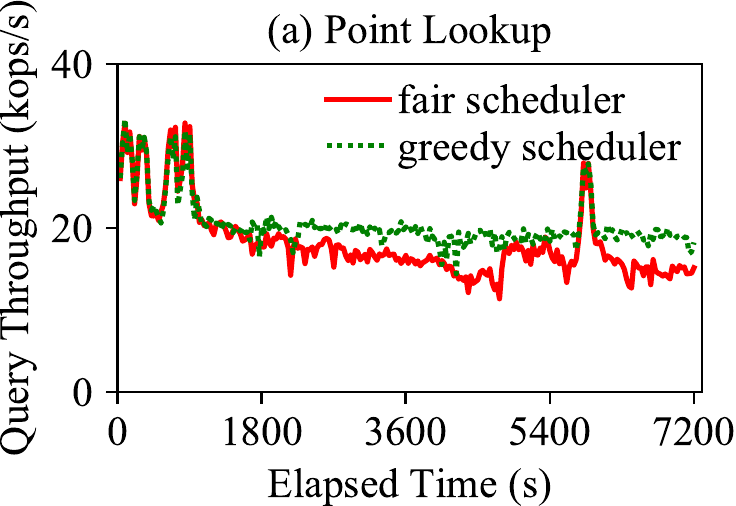}
	\end{minipage}
	\hfil
	\begin{minipage}[t]{.3\textwidth}
		\centering
		\includegraphics[width=\scale\linewidth]{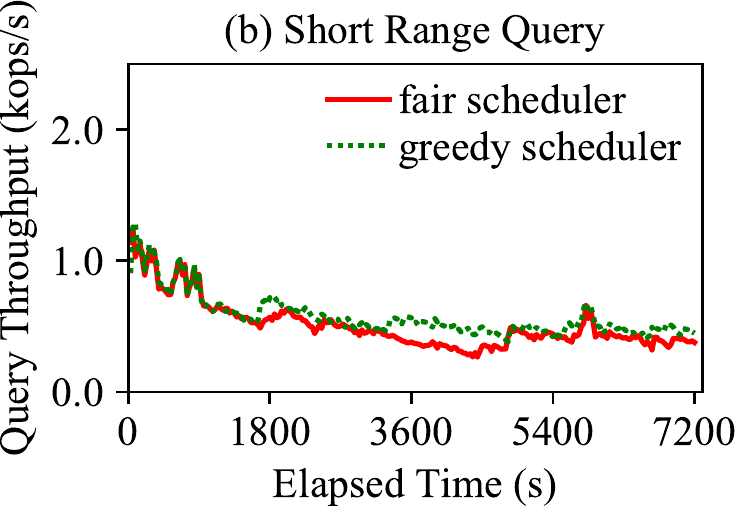}
	\end{minipage}
	\hfil
	\begin{minipage}[t]{.3\textwidth}
		\centering
		\includegraphics[width=\scale\linewidth]{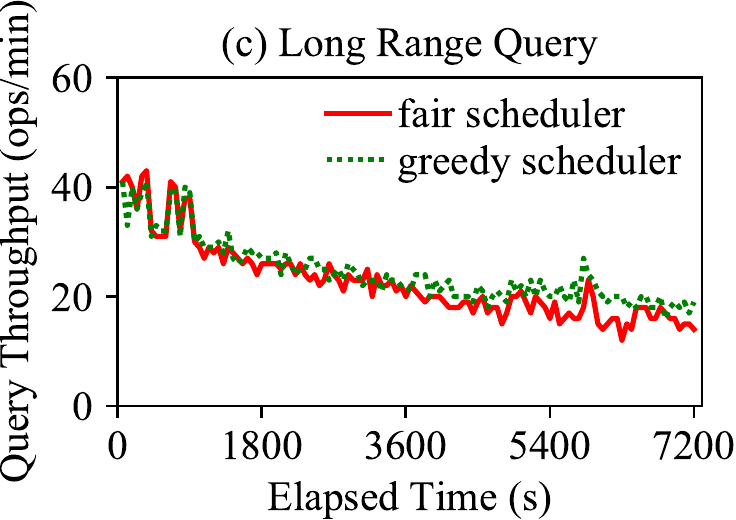}
	\end{minipage}
%	\vspace{-0.075in}
	\caption{Instantaneous Query Throughput of Tiering Merge Policy}
	\label{fig:expr-query-tier}
\end{figure*}

\begin{figure*}
	\centering
	\begin{minipage}[t]{.3\textwidth}
		\centering
		\includegraphics[width=\scale\linewidth]{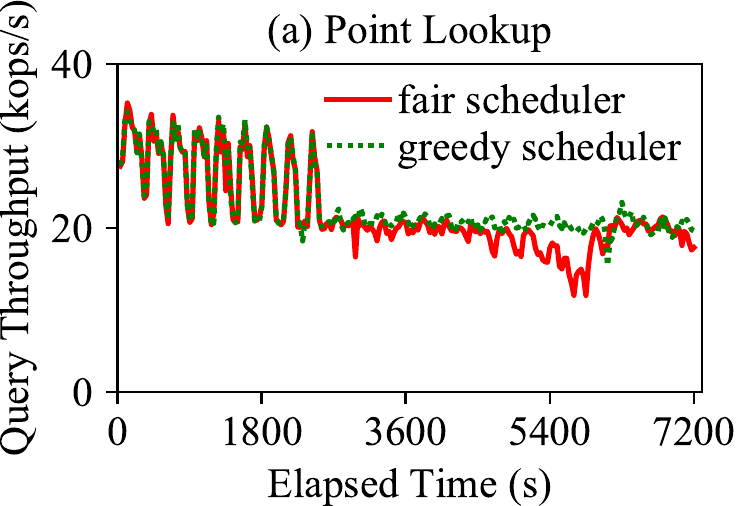}
	\end{minipage}
	\hfil
	\begin{minipage}[t]{.3\textwidth}
		\centering
		\includegraphics[width=\scale\linewidth]{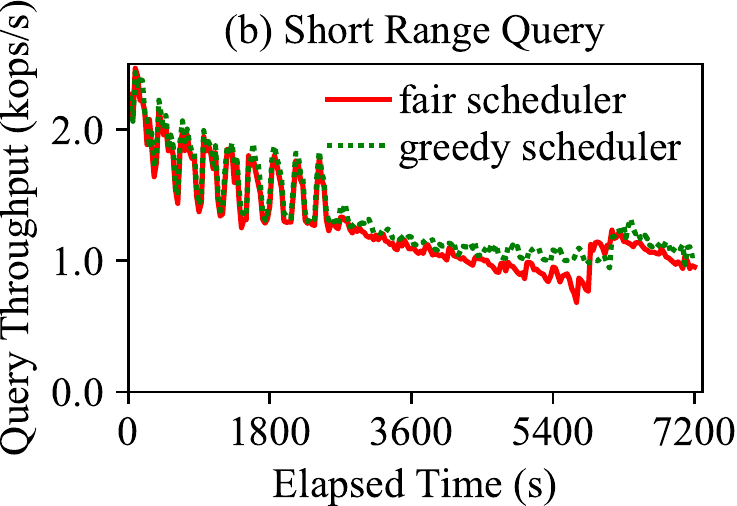}
	\end{minipage}
	\hfil
	\begin{minipage}[t]{.3\textwidth}
		\centering
		\includegraphics[width=\scale\linewidth]{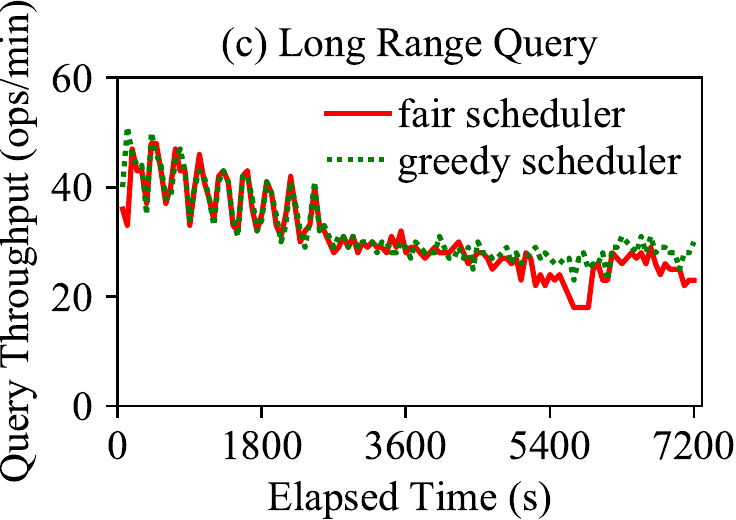}
	\end{minipage}
	\hfil
%	\vspace{-0.075in}
	\caption{Instantaneous Query Throughput of Leveling Merge Policy}
	\label{fig:expr-query-level}
\end{figure*}
}

\longonly{
	\begin{figure*}[!h]
		\centering
		\begin{minipage}[t]{.32\textwidth}
			\centering
			\includegraphics[width=\scale\linewidth]{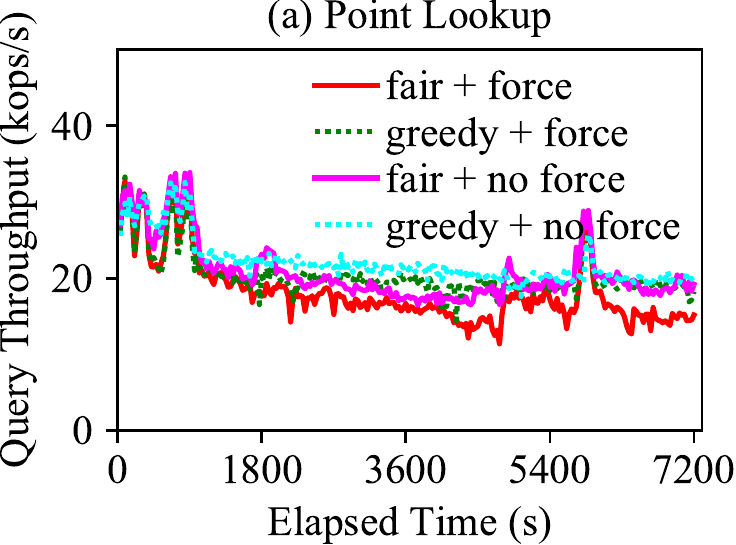}
		\end{minipage}
		\hfil
		\begin{minipage}[t]{.32\textwidth}
			\centering
			\includegraphics[width=\scale\linewidth]{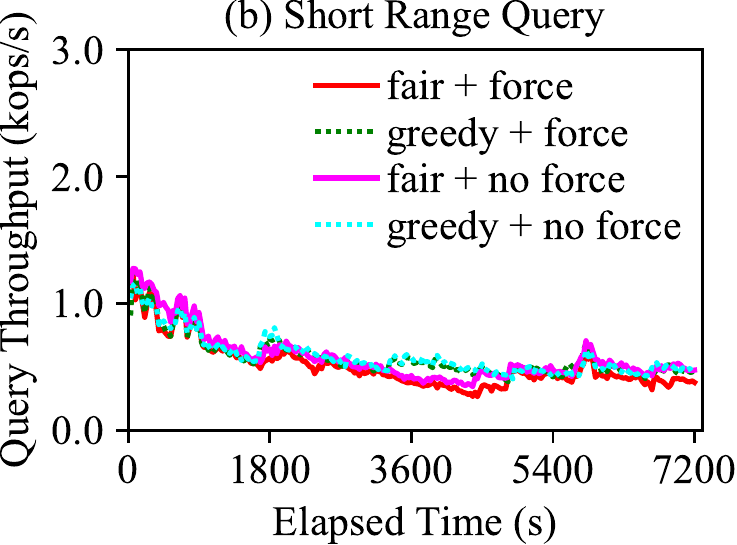}
		\end{minipage}
		\hfil
		\begin{minipage}[t]{.32\textwidth}
			\centering
			\includegraphics[width=\scale\linewidth]{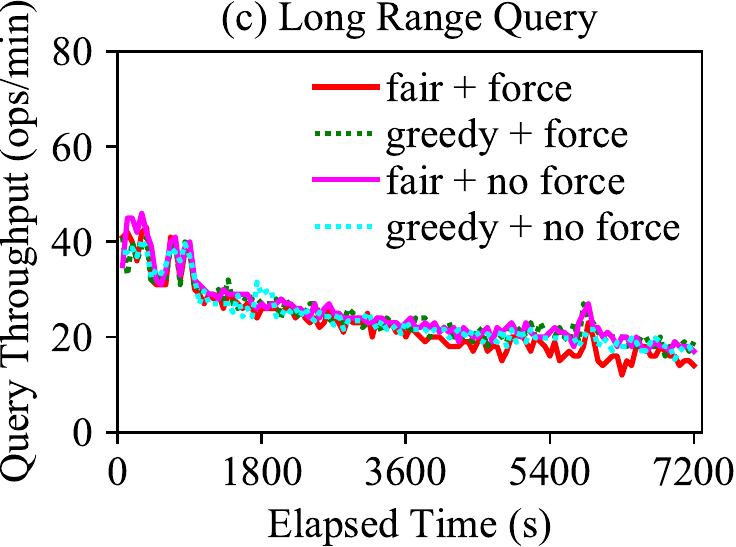}
		\end{minipage}
		\caption{Instantaneous Query Throughput of Tiering Merge Policy}
		\label{fig:expr-query-tier-extend}
	\end{figure*}
	\begin{figure*}
	\centering
	\begin{minipage}[t]{.32\textwidth}
		\centering
		\includegraphics[width=\scale\linewidth]{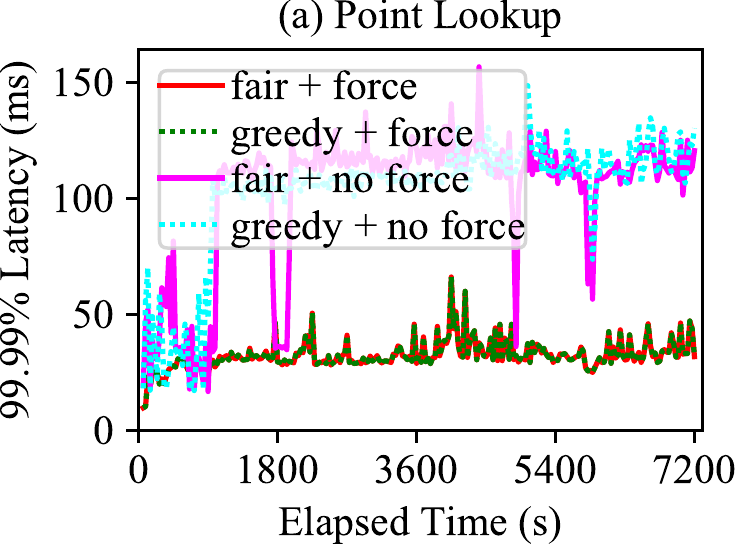}
	\end{minipage}
	\hfil
	\begin{minipage}[t]{.32\textwidth}
		\centering
		\includegraphics[width=\scale\linewidth]{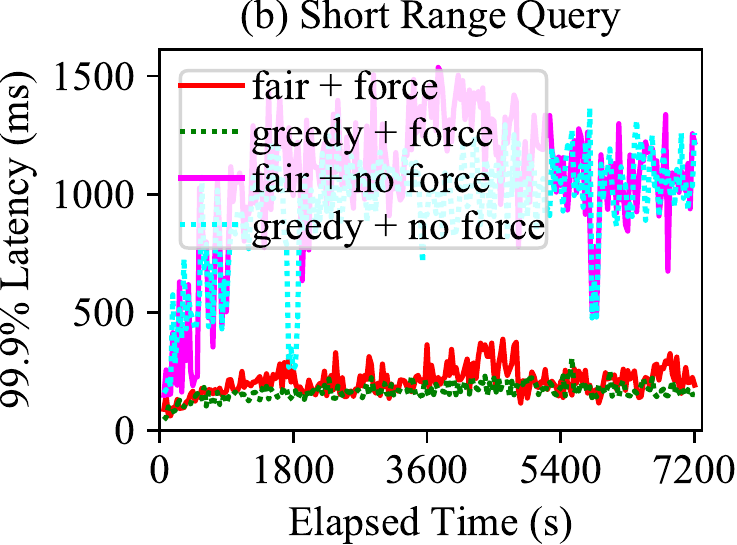}
	\end{minipage}
	\hfil
	\begin{minipage}[t]{.32\textwidth}
		\centering
		\includegraphics[width=\scale\linewidth]{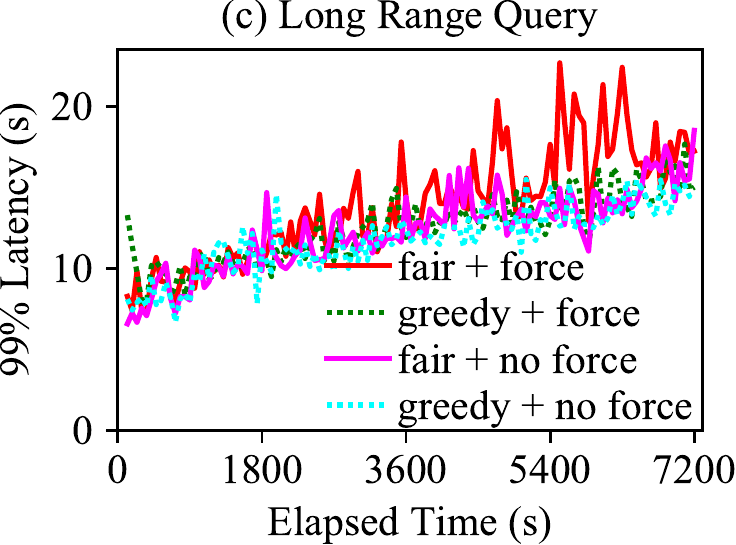}
	\end{minipage}
	\caption{Percentile Query Latencies of Tiering Merge Policy}
	\label{fig:expr-query-tier-latency}
\end{figure*}
	
	\begin{figure*}
		\centering
		\begin{minipage}[t]{.32\textwidth}
			\centering
			\includegraphics[width=\scale\linewidth]{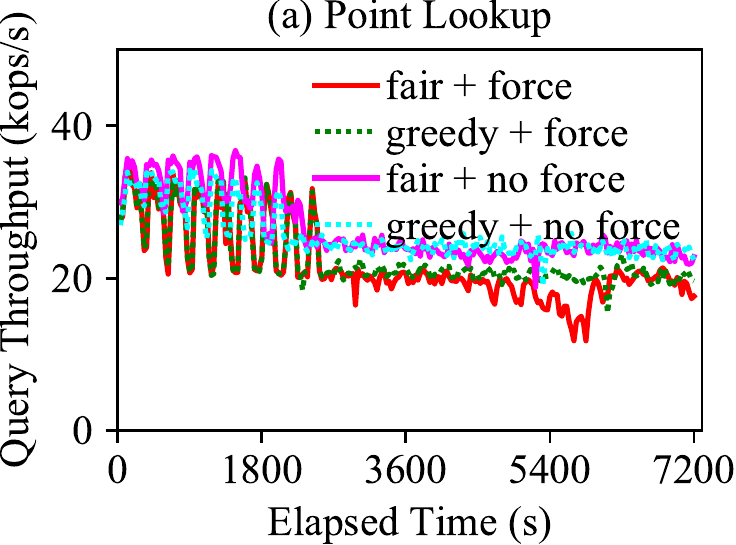}
		\end{minipage}
		\hfil
		\begin{minipage}[t]{.32\textwidth}
			\centering
			\includegraphics[width=\scale\linewidth]{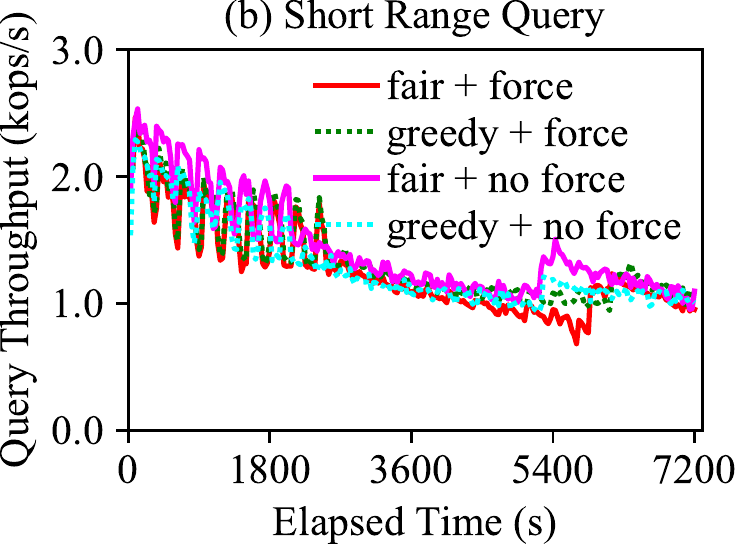}
		\end{minipage}
		\hfil
		\begin{minipage}[t]{.32\textwidth}
			\centering
			\includegraphics[width=\scale\linewidth]{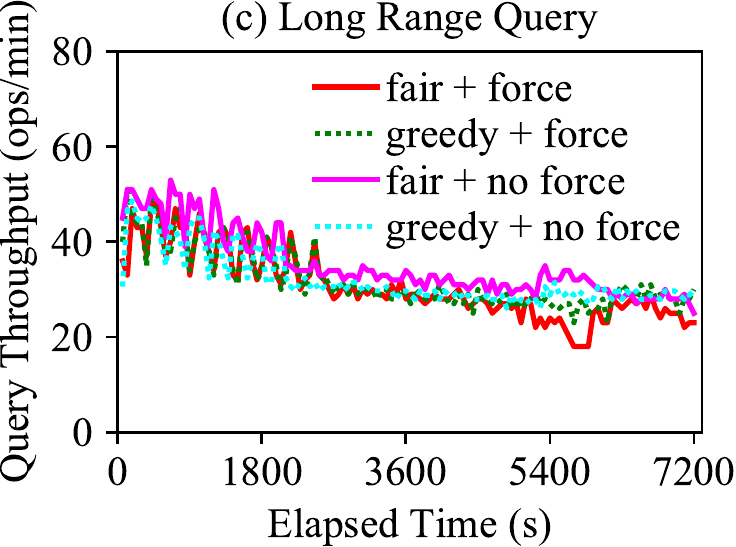}
		\end{minipage}
		\hfil
		\caption{Instantaneous Query Throughput of Leveling Merge Policy}
		\label{fig:expr-query-level-extend}
	\end{figure*}

	\begin{figure*}
	\centering
	\begin{minipage}[t]{.32\textwidth}
		\centering
		\includegraphics[width=\scale\linewidth]{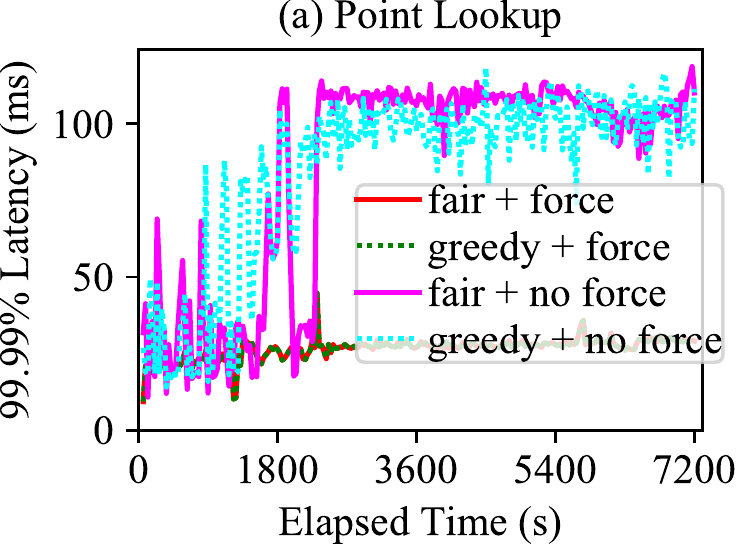}
	\end{minipage}
	\hfil
	\begin{minipage}[t]{.32\textwidth}
		\centering
		\includegraphics[width=\scale\linewidth]{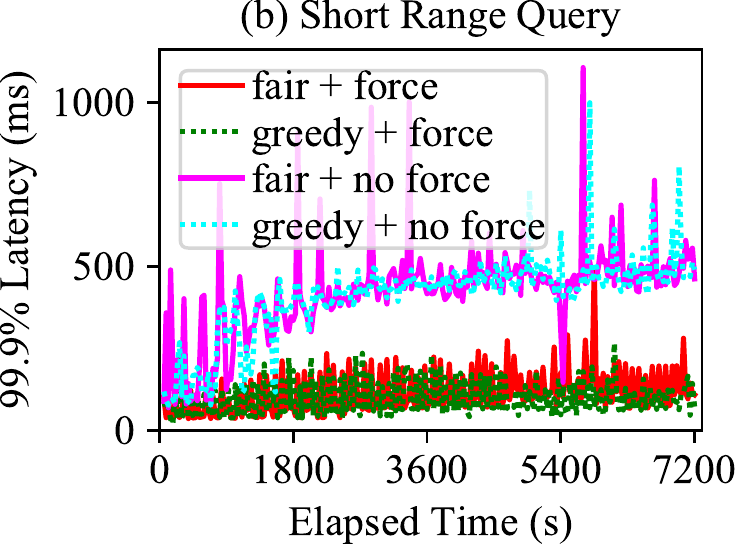}
	\end{minipage}
	\hfil
	\begin{minipage}[t]{.32\textwidth}
		\centering
		\includegraphics[width=\scale\linewidth]{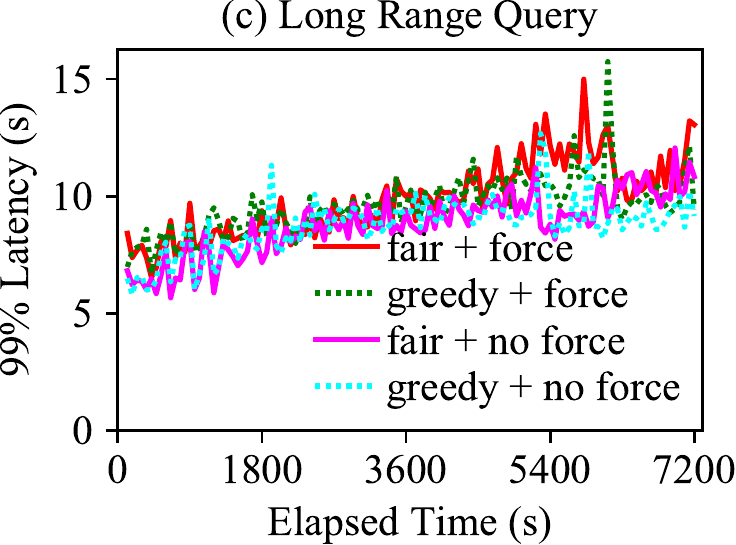}
	\end{minipage}
	\hfil
	\caption{Percentile Query Latencies of Leveling Merge Policy}
	\label{fig:expr-query-level-latency}
\end{figure*}

}

\textbf{Benefits of Processing Writes As Quickly As Possible.}
We further evaluated the benefit of processing writes as quickly as possible.
We used the leveling merge policy with a bursty data arrival process
that alternates between a normal arrival rate of 2000 records/s for 25 minutes and a high arrival rate of 8000 records/s for 5 minutes.
We evaluated two variations of the greedy scheduler.
The first variation processes writes as quickly as possible (denoted as ``No Limit''), as we did before.
The second variation enforces a maximum in-memory write rate of 4000 records/s (denoted as ``Limit'') to avoid write stalls.

The instantaneous write throughput and the percentile write latencies of the two variations are
shown in Figures \ref{fig:expr-level-burst}a and \ref{fig:expr-level-burst}b respectively.
As \reffigure{fig:expr-level-burst}a shows, delaying writes avoids write stalls and the resulting write throughput is more stable over time.
However, this causes larger write latencies (\reffigure{fig:expr-level-burst}b) since delayed writes must be queued.
In contrary, writing as quickly as possible causes occasional write stalls but still minimizes overall write latencies.
This confirms our previous analysis that processing writes as quickly as possible minimizes write latencies.

\textbf{Impact on Query Performance.}
Finally, since the point of having data is to query it,
we evaluated the impact of the fair and greedy schedulers on concurrent query performance.
We evaluated three types of queries, namely point lookups, short scans, and long scans.
A point lookup accesses 1 record given a primary key.
A short scan query accesses 100 records and a long scan query accesses 1 million records.
In each experiment, we executed one type of query concurrently with {concurrent updates with constant arrival rates as before}.
To maximize query performance while ensuring that LSM flush and merge operations receive enough {I/O bandwidth},
we used 8 query threads for point lookups and short scans and used 4 query threads for long scans.
\longonly{We also evaluated the impact of forcing SSD writes regularly on query performance.
For this purpose, we included the variations of the fair and greedy schedulers that only force SSD writes when a merge completes.}

\shortonly{
The instantaneous query throughput under tiering and leveling is depicted in Figures \ref{fig:expr-query-tier} and \ref{fig:expr-query-level} respectively.
%For point lookups and short scans, the query throughput is averaged over 30-second windows.
%For long scans, the query throughput is averaged over 1-minute windows.
{Due to space limitations, we omit the average query latency, which can be computed by dividing the number of query threads by the query throughput. As the results show, leveling has similar point lookup throughput to tiering because Bloom filters are able to filter out most unnecessary I/Os,
	but it has much better range query throughput than tiering.}
The greedy scheduler always improves query performance by minimizing the number of components.
Among the three types of queries, point lookups and short scans benefit more from the greedy scheduler
since these two types of queries are more sensitive to the number of disk components.
In contrast, long scans incur most of their I/O cost at the largest level.
Moreover, tiering benefits more from the greedy scheduler than leveling
because tiering has more disk components.
Note that with leveling, there is a drop in query throughput under the fair scheduler at around 5400s,
even though there is little difference in the number of disk components between the fair and greedy schedulers.
This drop is caused by write stalls during that period, as was seen in the instantaneous write throughput of \reffigure{fig:expr-level-open}a.
After the LSM-tree recovers from write stalls, it attempts to write as much data as possible to catch up,
which negative impacts query performance.

{In~\cite{lsm-stability-extend}, we additionally evaluated the impact of forcing SSD writes regularly on query performance.
Even though this optimization has some small negative impact on query throughput, it significantly reduces the percentile latencies of small queries, e.g., point lookups and small scans, by 5x to 10x because the large forces at the end of merges are instead broken down into many smaller ones.}
}

\longonly{
The instantaneous query throughput under the tiering and leveling merge policies is depicted in \reffigure{fig:expr-query-tier-extend}
and \reffigure{fig:expr-query-level-extend} respectively.
The corresponding percentile query latencies are shown in \reffigure{fig:expr-query-level-latency} and \reffigure{fig:expr-query-tier-latency} respectively.
For point lookups and short scans, the query throughput is averaged over 30-second windows.
For long scans, the query throughput is averaged over 1-minute windows.
As the results show, leveling has similar point lookup throughput to tiering because Bloom filters can filter out most unnecessary I/Os,
	but it has much better range query throughput than tiering.
Moreover, the greedy scheduler always improves query performance by minimizing the number of components.
Among the three types of queries, point lookups and short scans benefit more from the greedy scheduler
since these two types of queries are more sensitive to the number of disk components.
In contrast, long scans incur most of their I/O cost at the largest level.
Moreover, the tiering merge policy benefits more from the greedy scheduler than the leveling merge policy
since the performance difference between the greedy and fair schedulers is larger under the tiering merge policy.
This is because the tiering merge policy has more disk components than the leveling merge policy.
Note that under the leveling merge policy, there is a drop in query throughput under the fair scheduler at around 5400s,
even though there is little difference in the number of disk components between the fair and greedy scheduler.
This drop is caused by write stalls during that period, as indicated by the instantaneous write throughput of \reffigure{fig:expr-level-open}.
After the LSM-tree recovers from write stalls, it attempts to write as much data as possible in order to catch up,
which results in a lower query throughput.

Forcing SSD writes regularly has some slight negative impact on query throughput, but it significantly reduces the percentile query latencies.
The reason is that disk components must be forced to disk in the end of merges. Without forcing SSD writes regularly, the large disk forces will significantly impact the query latencies.
}

\longonly{
\subsection{Tiering in Practice}
Existing LSM-based systems, such as BigTable~\cite{bigtable} and HBase~\cite{hbase},
use a slight variation of the tiering merge policy discussed in the literature.
This variation, often referred as the \emph{size-tiered} merge policy, does not organize components into levels explicitly
but simply schedules merges based on the sizes of disk components.
This policy has three important parameters, namely the size ratio $T$, the minimum number of components to merge $min$,
and the maximum number of components to merge $max$.
It merges a sequence of components, whose length is at least $min$, when the total size of the sequence's
the younger components is $T$ times larger than that of the oldest component in the sequence.
It also seeks to merge as many components as possible at once until $max$ is reached.
Concurrent merges can also be performed.
For example, in HBase~\cite{hbase}, each execution of the size-tiered merge policy
will always examine the longest prefix of the component sequence in which no component is being merged.

An example of the size-tiered merge policy is shown in \reffigure{fig:size-tiered},
where each disk component is labeled with its size.
Let the size ratio be $1.2$ and the minimum and maximum number of components per merge be 2 and 4 respectively.
Suppose initially that no component is being merged.
The first of execution the size-tiered merge policy starts from the oldest component, labeled 100GB.
However, no merge is scheduled since this component is too large.
It then examines the next component, labeled 10GB, and schedules a merge operation for the 4 components labeled from 10GB to 5GB.
The next execution of the size-tiered merge policy starts from the component labeled 1GB, 
and it schedules a merge for the 3 components labeled from 128MB to 64MB.

\begin{figure}
	\centering
	\includegraphics[width=\linewidth]{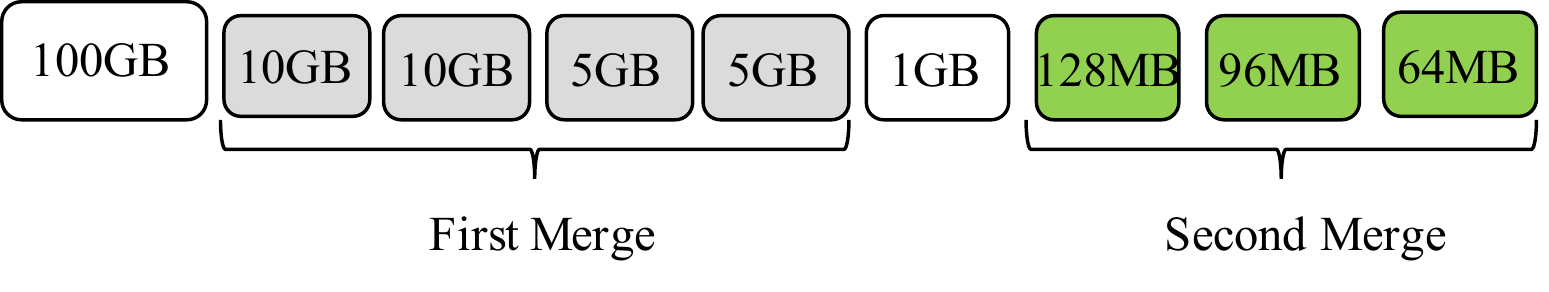}
	\caption{Example of Size-Tiered Merge Policy}
	\label{fig:size-tiered}
\end{figure}

To evaluate the write stalls of the size-tiered merge policy, we repeated the experiments using our two-phase approach.
In our evaluation, the size ratio was set at 1.2, which is the default value in HBase~\cite{hbase},
and the minimum and maximum mergeable components were set at 2 and 10 respectively.
The maximum tolerated disk components parameter was set at 50.

During the testing phase, the maximum write throughput measured by using the fair scheduler was 17,008 records/s.
Then during the running phase,
we used a constant data arrival process based on 95\% of this maximum throughput to evaluate write stalls.
The instantaneous write throughput of the LSM-tree and the number of disk components over time
are shown in Figures \ref{fig:expr-size-tiered}a and \ref{fig:expr-size-tiered}b respectively.
As one can see, write stalls have occurred under the fair scheduler.
Moreover, even though the greedy scheduler avoids write stalls, its number of disk components keeps increasing over time.
This result indicates that the maximum write throughput measured during the testing phase is not sustainable.

\begin{figure}
	\begin{subfigure}[t]{.235\textwidth}
		\centering
		\includegraphics[width=\linewidth]{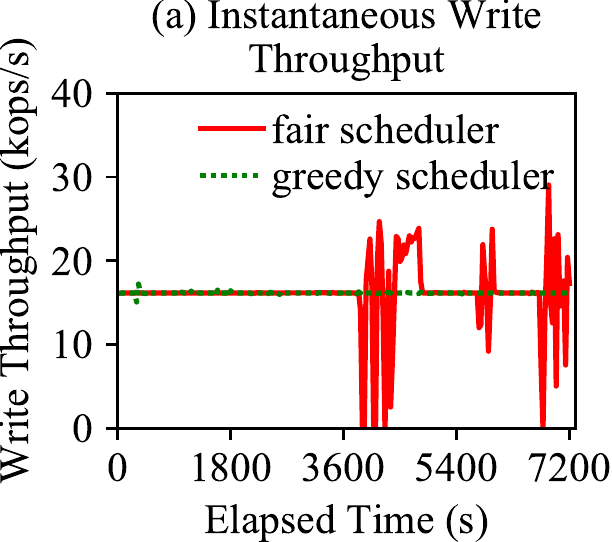}
	\end{subfigure}
	\hfil
	\begin{subfigure}[t]{.235\textwidth}
		\centering
		\includegraphics[width=\linewidth]{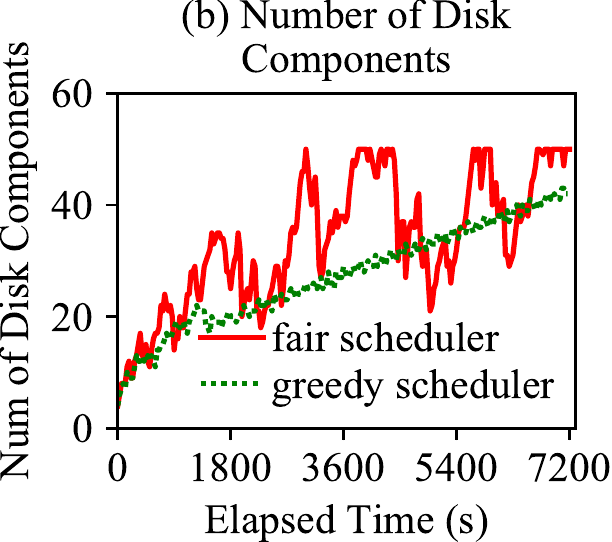}
	\end{subfigure}
	\caption{Running Phase of Size-Tiered Merge Policy (95\% Load)}
	\label{fig:expr-size-tiered}
\end{figure}

This problem is caused by the non-determinism of the size-tiered merge policy since it tries to merge as many disk components as possible.
This behavior impacts the maximum write throughput of the LSM-tree.
During the testing phase, when writes are often blocked because of too many disk components,
this merge policy tends to merge more disk components at once, which then leads to a higher write throughput.
However, during the running phase, when writes arrive steadily, this merge policy tends to schedule smaller merges as flushed components accumulate.
For example, during the running phase of this experiment, 55 long merges that involved 10 components were scheduled,
but only 24 long merges were scheduled under the fair scheduler during the running phase.
Even worse, 99.76\% of the scheduled merges under the greedy scheduler involved no more than 4 components
since large merges were starved.

To address problem and to minimize write stalls, the arrival rate must be reduced.
However, finding the maximum ``stall free'' arrival rate is non-trivial due to the non-determinism of the size-tiered merge policy.
Instead, we propose a simple and conservative solution to avoid write stalls.
During the testing phase, we propose to measure the lower bound write throughput by always merging the minimum number of disk components.
This write throughput will serve as a baseline of the arrival rate.
During runtime, the size-tiered merge policy can merge more disk components to
dynamically increase its write throughput to minimize stalls.

We repeated the previously experiments based on this solution.
During the testing phase, the merge policy always merged 2 disk components,
which resulted in a lower maximum write throughput of 8,863 records/s.
We then repeated the running phase based on this throughput.
Figures \ref{fig:expr-size-tiered-fixed}a and \ref{fig:expr-size-tiered-fixed}b
show the instantaneous write throughput and the number of disk components over time respectively during the running phase.
In this case, both schedulers exhibit no write stalls and the number of disk components is more stable over time.
Moreover, the greedy scheduler still slightly reduces the number of disk components.

%Moreover, comparing to the number of components under the original tiering merge policy (\reffigure{fig:expr-compare-tier}),
%the size-tiered merge policy exhibited more sudden drops in the number of disk components over time.
%Even though merging more components at once increases write throughput, it leads to longer merge time
%and thus can temporarily increase the number of disk components.

\begin{figure}
	\begin{subfigure}[t]{.235\textwidth}
		\centering
		\includegraphics[width=\linewidth]{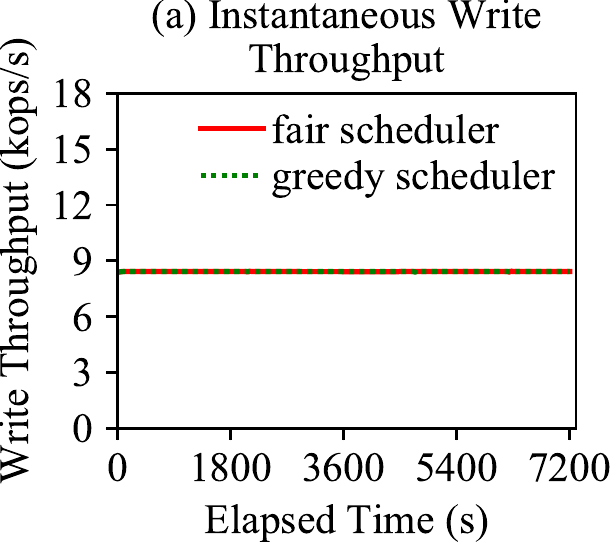}
	\end{subfigure}
	\hfil
	\begin{subfigure}[t]{.235\textwidth}
		\centering
		\includegraphics[width=\linewidth]{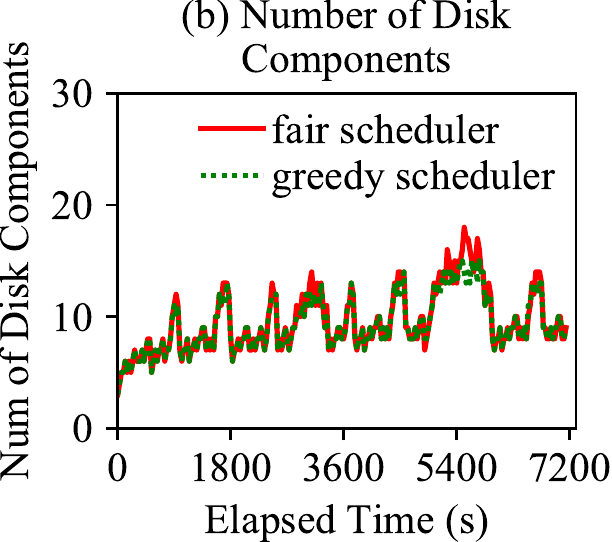}
	\end{subfigure}
	\caption{Running Phase of Size-Tiered Merge Policy with the Proposed Solution}
	\label{fig:expr-size-tiered-fixed}
\end{figure}
}

\section{Partitioned Merges}
\label{sec:partitioning}
We now examine the write stall behavior of partitioned LSM-trees using our two-phase approach.
In a partitioned LSM-tree, a large disk component is range-partitioned into multiple small {files}
and each merge operation only processes a small number of {files} with overlapping ranges.
Since merges always happen immediately once a level is full,  a single-threaded scheduler could be sufficient to minimize write stalls.
In the reminder of this section, we will evaluate LevelDB's single-threaded scheduler.

\subsection{LevelDB's Merge Scheduler}
LevelDB's merge scheduler is single-threaded.
It computes a score for each level and selects the level with the largest score to merge.
Specifically, the score for {Level} 0 is computed as the total number of flushed components divided by the minimum number of flushed components to merge.
For a partitioned level (1 and above), its score is defined as the total size of all {files} at this level divided by the configured maximum size.
A merge operation is scheduled if the largest score is at least 1, which means that the selected level is full.
If a partitioned level is chosen to merge, LevelDB selects the next {file} to merge in a round-robin way.

LevelDB only restricts the number of flushed components at {Level} 0.
By default, the minimum number of flushed components to merge is 4.
The processing of writes will be slowed down or stopped of the number of flushed component reaches 8 and 12 respectively.
Since we have already shown in \refsection{sec:full-merge-interact-writes} that processing writes as quickly as possible reduces write latencies, we will only use the stop threshold (12) in our evaluation.

\textbf{Experimental Evaluation.}
We have implemented LevelDB's partitioned leveling merge policy and its merge scheduler inside AsterixDB for evaluation.
Similar to LevelDB, the minimum number of flushed components to merge was set at 4 and the stop threshold was set at 12 components.
{Unless otherwise noted, the maximum size of each file was set at 64MB.}
The memory component size was set at 128MB and the base size of {Level} 1 was set at 1280MB.
The size ratio was set at 10.
For the experimental dataset with 100 million records, this results in a 4-level LSM-tree where the largest level is nearly full.
To minimize write stalls caused by flushes, we used two memory components and a separate flush thread.
{We further evaluated the impact of two widely used merge selection strategies on write stalls.}
The round-robin strategy chooses the next {file} to merge in a round-robin way.
The choose-best strategy~\cite{partial-merge2017} chooses the {file} with the fewest overlapping files at the next level.

We used our two-phase approach to evaluate this partitioned LSM-tree design.
The instantaneous write throughput during the testing phase is shown in \reffigure{fig:expr-partition}a,
where the write throughput of both strategies decreases over time due to more frequent stalls.
{Moreover, under the uniform update workload,
the alternative selection strategies have little impact on the overall write throughput, as reported in~\cite{lsm-model2016}.}
During the testing phase, we used a constant arrival process to evaluate write stalls.
The instantaneous write throughput of both strategies is shown in \reffigure{fig:expr-partition}b.
As the result shows, in both cases write stalls start to occur after time 6000s.
This suggests that the measured write throughput during the testing phase is not sustainable.

\begin{figure}
	\begin{subfigure}[t]{.235\textwidth}
		\centering
		\includegraphics[width=\linewidth]{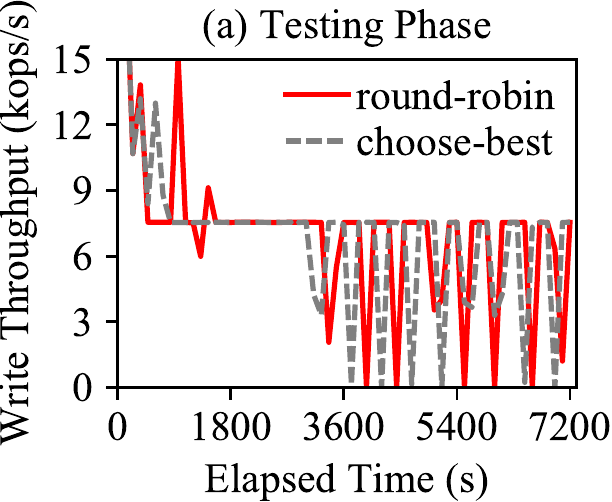}
	\end{subfigure}
	\hfil
	\begin{subfigure}[t]{.235\textwidth}
		\centering
		\includegraphics[width=\linewidth]{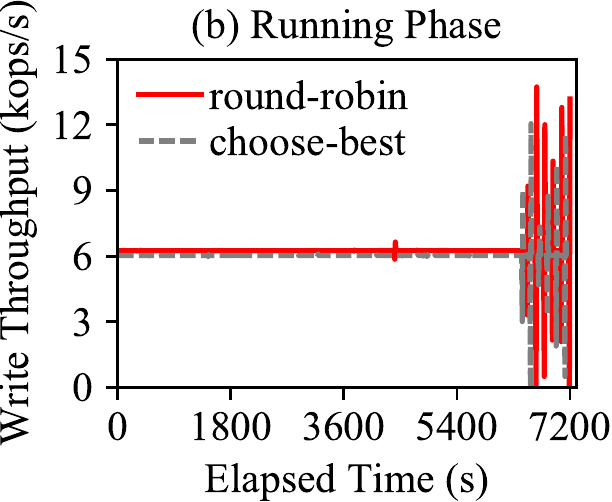}
	\end{subfigure}
	\vspace{-0.2in}
	\caption{Instantaneous Write Throughput under Two-Phase Evaluation of Partitioned LSM-tree}
	\label{fig:expr-partition}
\end{figure}

\subsection{Measuring Sustainable Write Throughput}
One problem with LevelDB's score-based merge scheduler is that it merges as many components at {Level} 0 as possible at once.
To see this, suppose that the minimum number of mergeable components at {Level} $L_0$ is $T_0$ 
and that the maximum number of components at {Level} $0$ is $T'_0$.
During the testing phase, where writes pile up as quickly as possible,
the merge scheduler tends to merge the maximum possible number of components $T'_0$ instead of just $T_0$ at once.
Because of this, the LSM-tree will eventually transit from the expected shape (\reffigure{fig:partitioning}a)
to the actual shape (\reffigure{fig:partitioning}b), where $T$ is the size ratio of the partitioned levels.
Note that the largest level is not affected since its size is determined by the number of unique entries, which is relatively stable.
Even though this elastic design dynamically increases the processing rate as needed,
it has the following problems.

{
\textbf{Unsustainable Write Throughput.}
The measured maximum write throughput is based on merging $T'_0$ flushed components at {Level} $0$
at once. However, this is likely to cause write stalls during the running phase since flushes cannot further proceed.

\textbf{Suboptimal Trade-Offs.}
The LSM-tree structure in \reffigure{fig:partitioning}b is no longer making optimal performance trade-offs since
the size ratios between its adjacent levels are not the same anymore~\cite{lsm1996}.
By adjusting the sizes of intermediate levels so that adjacent levels have the same size ratio,
one can improve both write throughput and space utilization without affecting query performance.

\textbf{Low Space Utilization.}
One motivation for industrial systems to adopt partitioned LSM-trees is their higher space utilization~\cite{rocksdb-space2017}.
However, the LSM-tree depicted in \reffigure{fig:partitioning}b violates this performance guarantee because the
ratio of wasted space increases from $1/T$ to $T'_0/T_0 \cdot 1/T$.
}

\begin{figure}
		\centering
		\includegraphics[width=\linewidth]{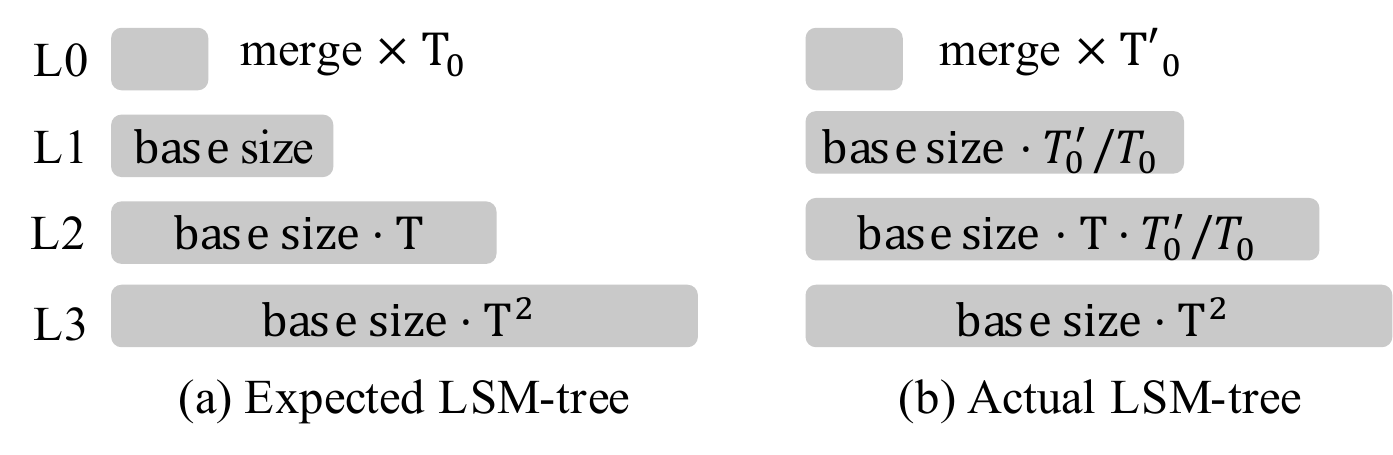}
		\vspace{-0.25in}
		\caption{Problem of Score-Based Merge Scheduler}
		\label{fig:partitioning}
\end{figure}

{Because of these problems, the measured maximum write throughput cannot be used in the long-term.}
We propose a simple solution to address these problems.
During the testing phase, we always merge exactly $T_0$ components at {Level} 0.
This ensures that merge preferences will be given equally to all levels so that
the LSM-tree will stay in the expected shape (\reffigure{fig:partitioning}a).
Then, during the running phase, the LSM-tree can elastically merge more components at {Level} 0 as needed to absorb write bursts.

To verify the effectiveness of the proposed solution, we repeated the previous experiments on the partitioned LSM-tree.
During the testing phase, the LSM-tree always merged 4 components at {Level} 0 at once.
The measured instantaneous write throughput is shown in \reffigure{fig:expr-partition-fixed}a,
which is {30\%} lower than that of the previous experiment.
During the running phase, we used a constant arrival process based on this lower write throughput.
The resulting instantaneous write throughput is shown in \reffigure{fig:expr-partition-fixed}b,
where the LSM-tree successfully maintains a sustainable write throughput without any write stalls,
which in turn results in low write latencies (not shown in the figure).
This confirms that LevelDB's single-threaded scheduler is sufficient to minimize write stalls,
given that a single merge thread can fully utilize the {I/O bandwidth} budget.

\begin{figure}
	\centering
	\begin{subfigure}[t]{.235\textwidth}
		\centering
		\includegraphics[width=\textwidth]{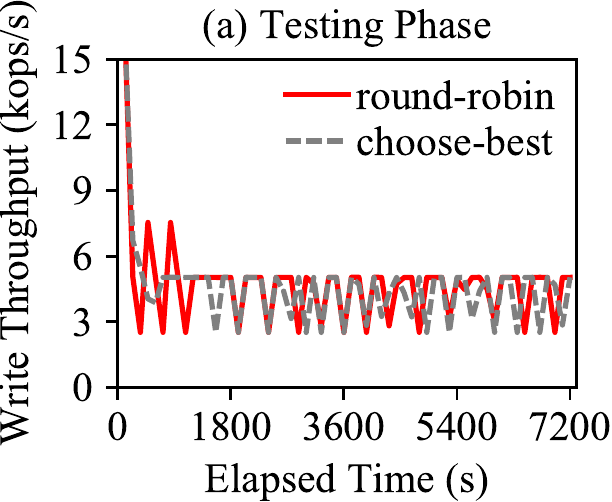}
	\end{subfigure}
	\hfil
	\begin{subfigure}[t]{.235\textwidth}
		\centering
		\includegraphics[width=\linewidth]{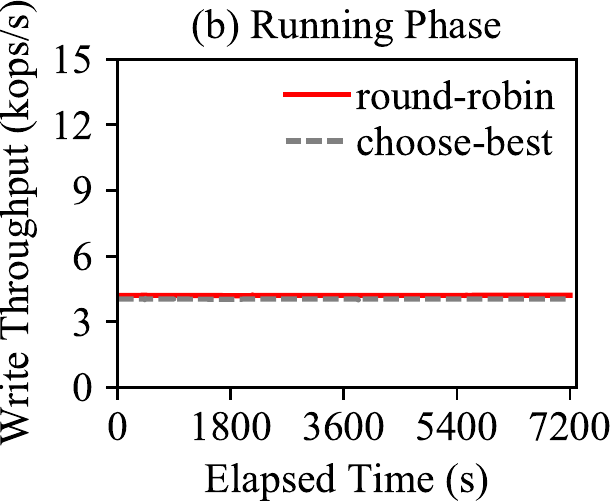}
	\end{subfigure}
	\vspace{-0.2in}
	\caption{Instantaneous Write Throughput under Two-Phase Evaluation of Partitioned LSM-tree with the Proposed Solution}
	\label{fig:expr-partition-fixed}
\end{figure}

{
After fixing the unsustainable write throughput problem of LevelDB, 
we further evaluated the impact of partition size on the write stalls of partitioned LSM-trees.
In this experiment, we varied the size of each partitioned file from 8MB to 32GB so that partitioned merges effectively transit into full merges.
The maximum write throughput during the running phase and the 99th percentile write latencies during the testing phase
are shown in Figures \ref{fig:expr-partition-size}a and \ref{fig:expr-partition-size}b respectively.
Even though the partition size has little impact on the overall write throughput, a large partition size can cause large write latencies
since we have shown in \refsection{sec:full-merge} that a single-threaded scheduler is insufficient to minimize write stalls for full merges.
Most implementations of partitioned LSM-trees today already choose a small partition size to bound the temporary space occupied by merges.
We see here that one more reason to do so is to minimize write stalls under a single-threaded scheduler.
}

\begin{figure}
	\centering
	\begin{subfigure}[t]{.227\textwidth}
		\centering
		\includegraphics[width=\linewidth]{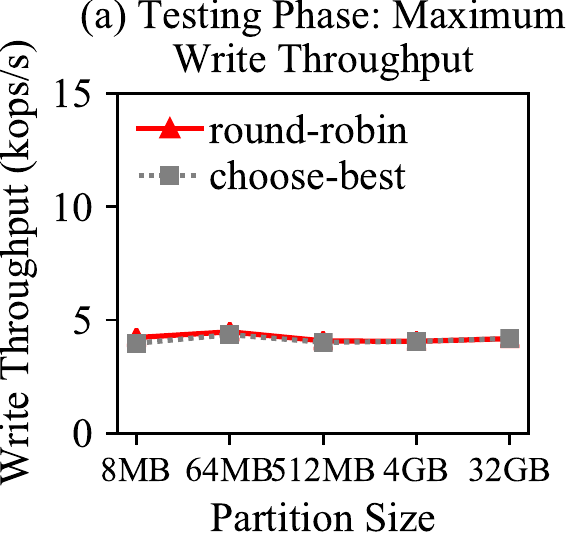}
	\end{subfigure}
	\hfil
	\begin{subfigure}[t]{.243\textwidth}
		\centering
		\includegraphics[width=\linewidth]{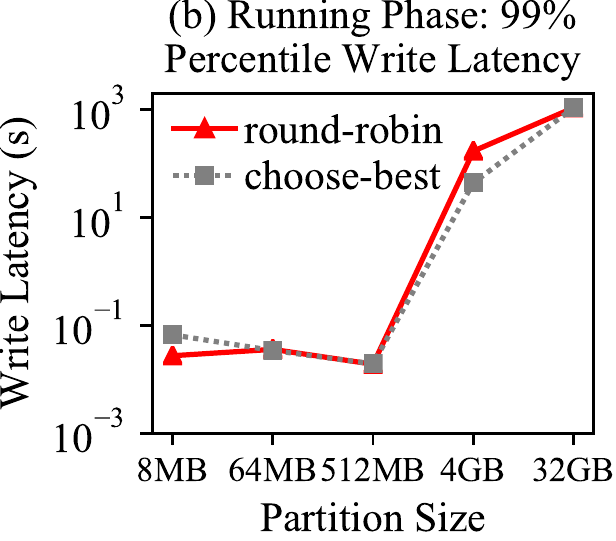}
	\end{subfigure}
	\vspace{-0.2in}
	\caption{{Impact of Partition Size on Write Stalls}}
	\label{fig:expr-partition-size}
\end{figure}

\longonly{
\section{Extension: Secondary Indexes}
\label{sec:secondary-indexes}
We now extend our two-phase approach to evaluate LSM-based datasets in the presence of secondary indexes.
We first discuss two secondary index maintenance strategies used in practical systems, followed by the experimental evaluation and analysis.

\begin{figure}
	\centering
	\begin{subfigure}[t]{.235\textwidth}
		\centering
		\includegraphics[width=\linewidth]{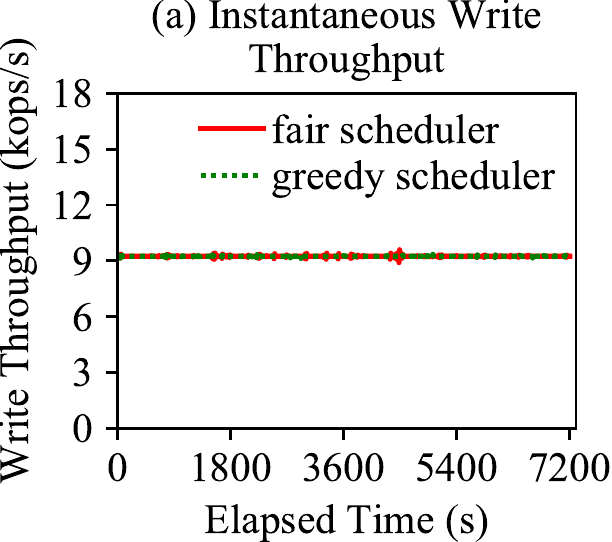}
	\end{subfigure}
	\hfil
	\begin{subfigure}[t]{.235\textwidth}
		\centering
		\includegraphics[width=\linewidth]{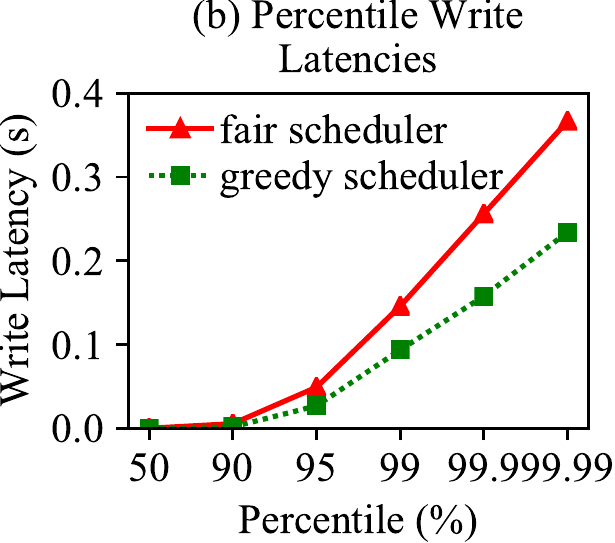}
	\end{subfigure}
	\caption{Running Phase of Lazy Strategy}
	\label{fig:expr-lazy-running}
\end{figure}

\begin{figure}
	\centering
	\begin{subfigure}[t]{.235\textwidth}
		\centering
		\includegraphics[width=\linewidth]{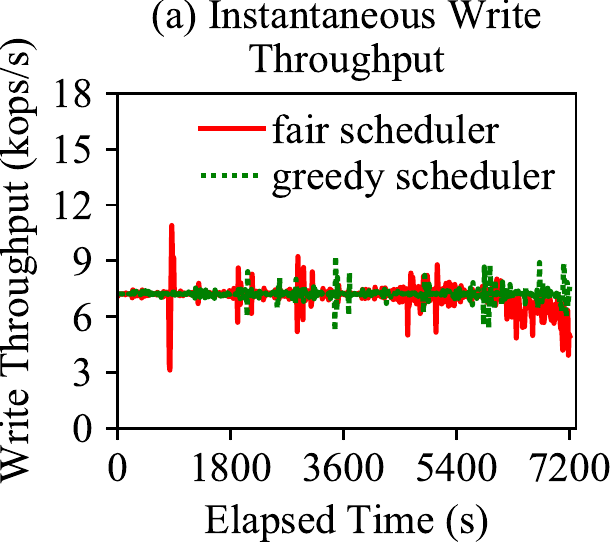}
	\end{subfigure}
	\hfil
	\begin{subfigure}[t]{.235\textwidth}
		\centering
		\includegraphics[width=\linewidth]{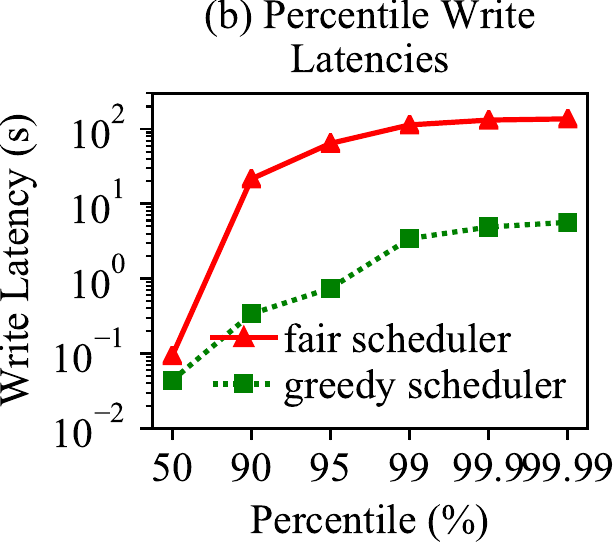}
	\end{subfigure}
	\caption{Running Phase of Eager Strategy}
	\label{fig:expr-eager-running}
\end{figure}

\subsection{Secondary Index Maintenance}
An LSM-based storage system often contains a primary index plus multiple secondary indexes for a given dataset~\cite{lsm-storage2019}.
The primary index stores the records indexed by their keys,
while each secondary index stores the mapping from secondary keys to primary keys.
During data ingestion, secondary indexes must be properly maintained to ensure correctness.
In the primary LSM-tree, writes (inserts, deletes, and updates) can be added blindly to memory
since entries with identical keys will be reconciled by queries automatically.
However, this mechanism does not work for secondary indexes since the value of a secondary index key might change.
Thus, in addition to adding the new entry to the secondary index, the old entry (if any) must be cleaned as well.
We now discuss two secondary index maintenance strategies used in practice~\cite{lsm-storage2019}.

The \emph{eager} index maintenance strategy performs a point lookup to fetch the old record during the ingestion time.
If the old record exists, anti-matter entries are produced to cleanup its secondary indexes.
The new record is then added to the primary index and all secondary indexes.
In an update-heavy workload, these point lookups can become the ingestion bottleneck instead of the LSM-tree write operations.

The \emph{lazy} index maintenance strategy does not cleanup secondary indexes during the ingestion time.
Instead, it only adds the new entry into secondary indexes without any point lookups.
Secondary indexes are then cleaned up in the background either when merging the primary index components~\cite{deli2015} or when merging the secondary index components~\cite{lsm-storage2019}.
Evaluating different secondary index cleanup methods is beyond the scope of this work.
Instead, we choose to evaluate the lazy strategy without cleaning up secondary indexes.

\begin{figure}
	\centering
	\includegraphics[width=0.65\linewidth]{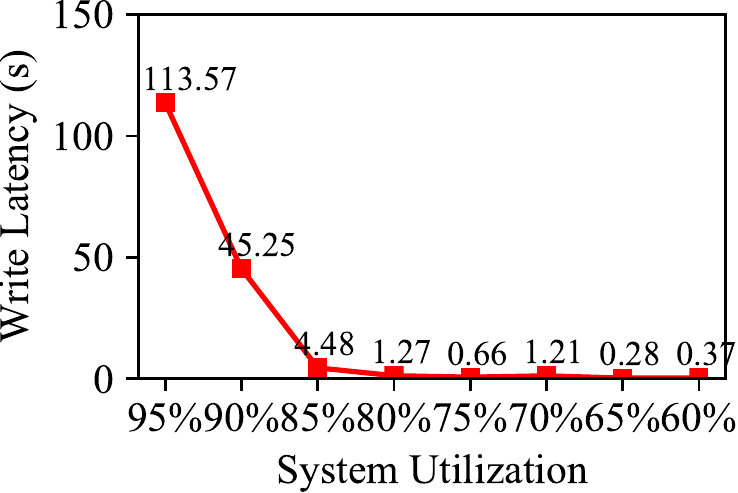}
	\caption{99\% Percentile Write Latencies under Eager Strategy with Varying Utilization}
	\label{fig:expr-eager-util-write-latency}
\end{figure}

\begin{figure*}
	\centering
	\begin{minipage}[t]{.24\textwidth}
		\centering
		\includegraphics[width=\linewidth]{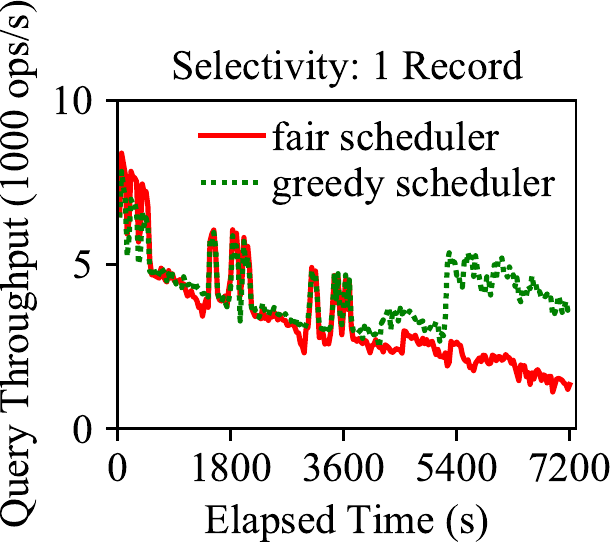}
	\end{minipage}
	\hfil
	\begin{minipage}[t]{.23\textwidth}
		\centering
		\includegraphics[width=\linewidth]{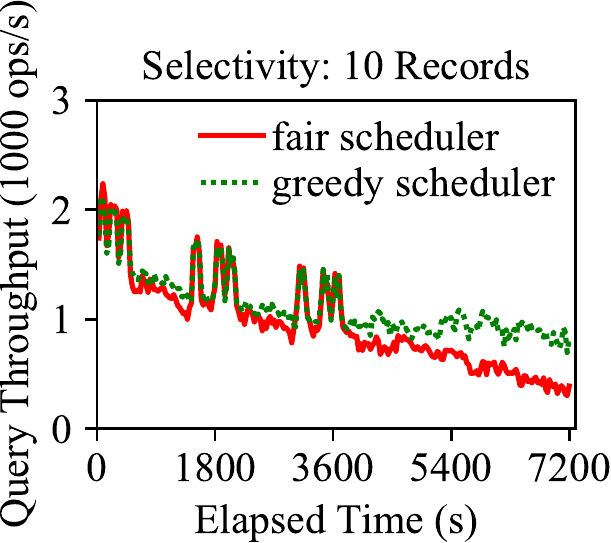}
	\end{minipage}
	\hfil
	\begin{minipage}[t]{.24\textwidth}
		\centering
		\includegraphics[width=\linewidth]{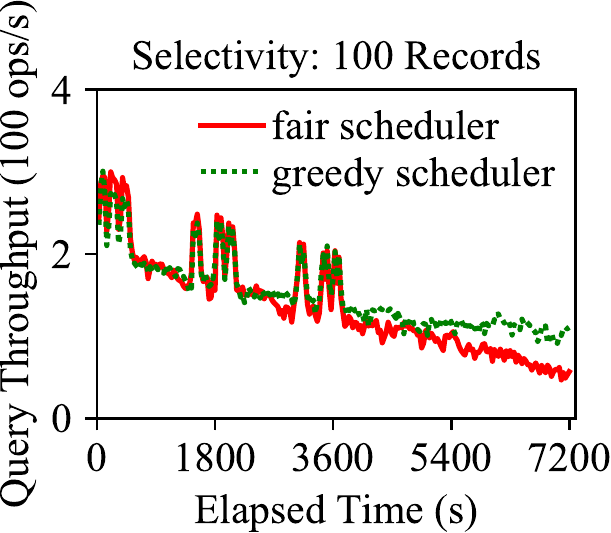}
	\end{minipage}
	\hfil
	\begin{minipage}[t]{.24\textwidth}
		\centering
		\includegraphics[width=\linewidth]{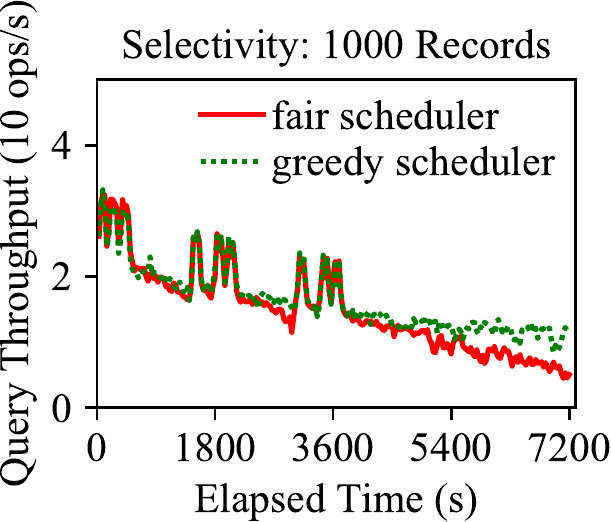}
	\end{minipage}
	\caption{Instantaneous Query Throughput of Lazy Strategy}
	\label{fig:expr-lazy-query}
\end{figure*}

\begin{figure*}
	\centering
	\begin{minipage}[t]{.24\textwidth}
		\centering
		\includegraphics[width=\linewidth]{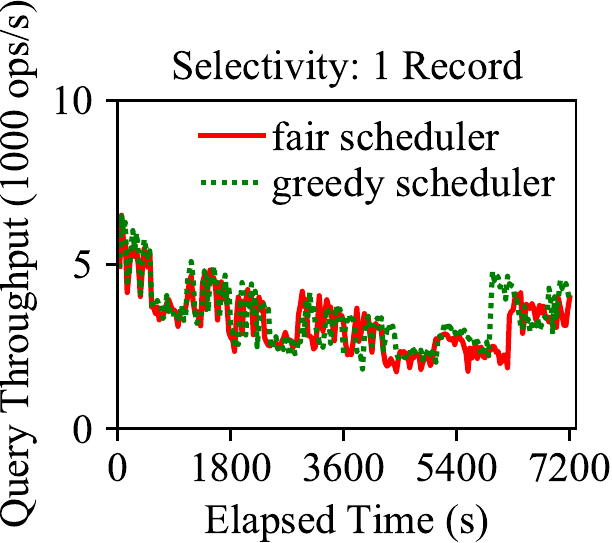}
	\end{minipage}
	\hfil
	\begin{minipage}[t]{.23\textwidth}
		\centering
		\includegraphics[width=\linewidth]{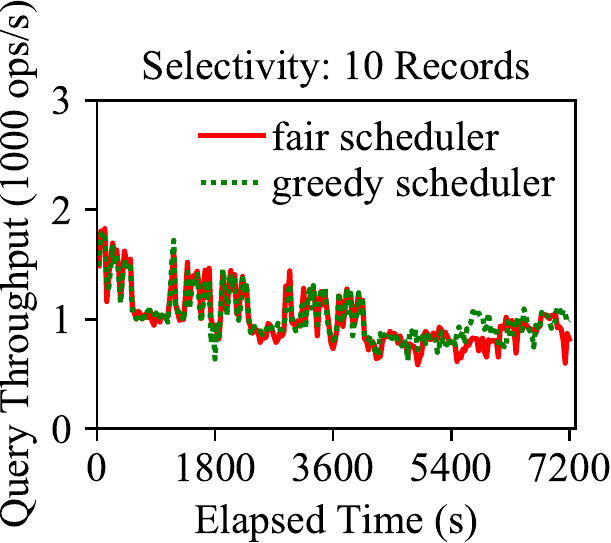}
	\end{minipage}
	\hfil
	\begin{minipage}[t]{.24\textwidth}
		\centering
		\includegraphics[width=\linewidth]{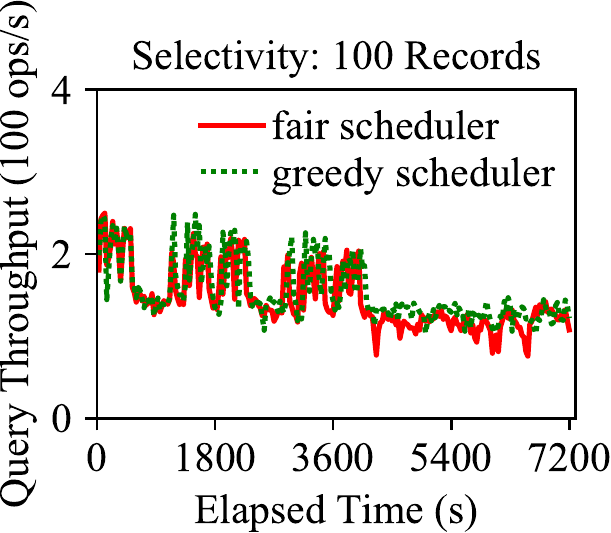}
	\end{minipage}
	\hfil
	\begin{minipage}[t]{.24\textwidth}
		\centering
		\includegraphics[width=\linewidth]{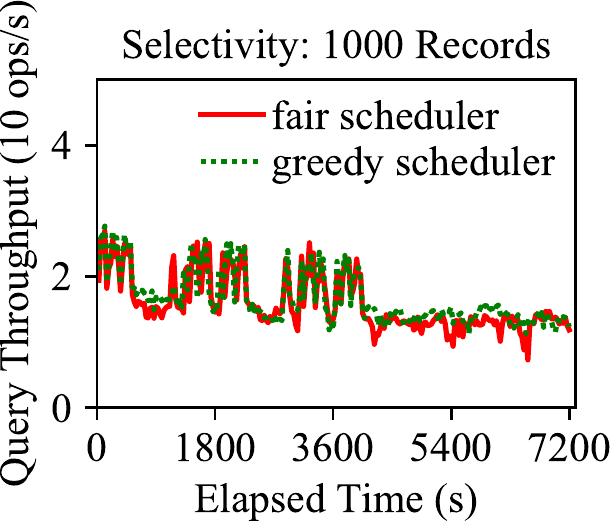}
	\end{minipage}
	\caption{Instantaneous Query Throughput with Eager Strategy}
	\label{fig:expr-eager-query}
\end{figure*}

\subsection{Experimental Evaluation}
\textbf{Experiment Setup.}
In this set of experiments, we modified the YCSB benchmark to allow us incorporate secondary indexes and formulate secondary index queries.
Specifically, we generated records with multiple fields with
each secondary field value randomly following a uniform distribution based on the total number of base records.
%Each field value is then padded with 0's on the left side so that the field length remains unchanged.
We built two secondary indexes in our experiment.
The primary index and the two secondary indexes all used the tiering merge policy with size ratio 3.

In this set of experiments, we evaluated two merge schedulers, namely fair and greedy.
Each LSM-tree is merged independently with a separate merge scheduler instance.
However, these LSM-trees shared the same memory budget 128MB for each memory component
and the {I/O} bandwidth budget of 100MB/s.
We also evaluated two index maintenance strategies, namely eager and lazy.
For the eager strategy, we used 8 writer threads to maximize the point lookup throughput.
For the lazy strategy, 1 writer thread was sufficient to reach the maximum write throughput since there were no point lookups during data ingestion.

\textbf{Testing Phase.}
We first measured the maximum write throughput of the lazy and eager strategies using the fair scheduler during the testing phase.
The maximum write throughput was 9,731 records/s for the lazy strategy and 7,601 records/s for the eager strategy.
%The experimental result is shown in \reffigure{fig:expr-secondary-max}, where the write throughput is
%again averaged using a 2 minute window for readability.
(The eager strategy results in a slightly lower write throughput because
it has to cleanup secondary indexes using point lookups.)

\textbf{Running Phase.}
During the running phase, we used constant data arrivals to evaluate write stalls.
The instantaneous write throughput and percentile write latencies for the lazy and eager strategies are shown in Figures \ref{fig:expr-lazy-running}
and \ref{fig:expr-eager-running} respectively.
The lazy strategy exhibits a relatively stable write throughput (\reffigure{fig:expr-lazy-running}a) and lower write latencies (\reffigure{fig:expr-lazy-running}b), which is similar to the single LSM-tree case.
However, under the eager strategy, there are regular fluctuations in the write throughput (\reffigure{fig:expr-eager-running}a),
results in larger write latencies (\reffigure{fig:expr-eager-running}b).
This is because the write throughput of the eager strategy is bounded by point lookups in this experiment,
and the point lookup throughput inherently varies due to ongoing disk activities and the number of disk components.
Based on queuing theory~\cite{queuing2013}, the system utilization must be reduced to minimize the write latency.
Moreover, the greedy scheduler still has lower write latencies due to its
minimizing the number of disk components to improve point lookup performance.

Since the eager strategy results in large percentile write latencies under a high data arrival rate,
we further carried out another experiment to evaluate the percentile write latencies under different system utilizations, that is, different data arrival rates.
The resulting 99\% percentile write latencies under various utilizations are shown in \reffigure{fig:expr-eager-util-write-latency}.
As the result shows, the write latency becomes relatively small once the utilization is below 80\%.
This is much smaller than the utilization used in our previous experiments, which was 95\%.
This result also confirms that because of the inherent variance of the point lookup throughput, one must reduce the data arrival rate, that is, the system utilization, to achieve smaller write latencies.

\textbf{Secondary Index Queries.}
Finally, we evaluated the impact of different merge schedulers and maintenance strategies on the performance of secondary index queries.
We used 8 query threads to maximize query throughput.
Each secondary index query first scans the secondary index to fetch primary keys,
which are then sorted and used to fetch records from the primary index.
We varied the query selectivity from 1 record to 1000 records so that
the performance bottleneck eventually shifts from secondary index scans to primary index lookups.

The instantaneous query throughput for various query selectivities under the lazy and eager strategy is shown in Figures \ref{fig:expr-lazy-query} and \ref{fig:expr-eager-query} respectively.
The query throughput is averaged over each 30-second windows.
In general, the greedy scheduler improves secondary index query performance under all query selectivities
since it reduces the number of disk components for both the primary index and secondary indexes.
The improvement is less significant under the eager strategy since the arrival rate is lower.

To summarize, under the lazy strategy, an LSM-based dataset with multiple secondary indexes has similar performance characteristics
to the single LSM-tree case, because this can be viewed as a simple extension to multiple LSM-trees.
The greedy scheduler also improves query performance by minimizing the number of disk components as before.
However, under the eager strategy, the point lookups actually become the ingestion bottleneck instead of LSM-tree write operations.
This not only reduces the overall write throughput, but further causes larger write latencies due to the inherent variance of the point lookup throughput.
}

\section{Lessons and Insights}
\label{sec:summary}
Having studied and evaluated the write stall problem for various LSM-tree designs,
here we summarize the lessons and insights observed from our evaluation.

\textbf{The LSM-tree's write latency must be measured properly.}
The out-of-place update nature of LSM-trees has introduced the write stall problem.
Throughout our evaluation, we have seen cases where one can obtain a higher but unsustainable write throughput.
For example, the greedy scheduler would report a higher write throughput by starving large merges,
and LevelDB's merge scheduler would report a higher but unsustainable write throughput by dynamically adjusting the shape of the LSM-tree.
Based on our findings, we argue that in addition to the testing phase, used by existing LSM research,
an extra running phase must be performed to evaluate the usability of the measured maximum write throughput.
Moreover, the write latency must be measured properly due to queuing.
One solution is to use the proposed two-phase evaluation approach to evaluate the resulting write latencies under high utilization,
where the arrival rate is close to the processing rate.

\textbf{Merge scheduling is critical to minimizing write stalls.}
Throughout our evaluation of various LSM-tree designs, including bLSM~\cite{blsm2012}, full merges, and partitioned merges,
we have seen that merge scheduling has a critical impact on write stalls. 
Comparing these LSM-tree designs in general depends on many factors and is beyond the scope of this paper;
here we have focused on how to minimize write stalls for each LSM-tree design.

bLSM~\cite{blsm2012}, an instance of full merges, introduces a sophisticated spring-and-gear merge scheduler to bound the processing latency of LSM-trees. However, we found that bLSM still has large variances in its processing rate, leading to large write latencies under high arrival rates.
Among the three evaluated schedulers, namely single-threaded, fair, and greedy, the single-threaded scheduler should not be used in practical systems due to the long stalls caused by large merges.
The fair scheduler should be used when measuring the maximum throughput because it provides fairness to all merges.
The greedy scheduler should be used at runtime since it better minimizes the number of disk components,
both reducing write stalls and improving query performance.
Moreover, as an important design choice, global component constraints better minimizes write stalls.

Partitioned merges simplify merge scheduling by breaking large merges into many smaller ones.
However, we found a new problem that the measured maximum write throughput of LevelDB is unsustainable because it dynamically adjusts the size ratios under write-intensive workloads. After fixing this problem, a single-threaded scheduler with a small partition size, as used by LevelDB,
is sufficient for delivering low write latencies under high utilization.
However, fixing this problem reduced the maximum write throughput of LevelDB roughly one-third in our evaluation.

For both full and partitioned merges, processing writes as quickly as possible better minimizes write latencies.
Finally, with proper merge scheduling, all LSM-tree designs can indeed minimize write stalls by delivering low write latencies under high utilizations.

\section{Conclusion}
\label{sec:conclusion}
In this paper, we have studied and evaluated
the write stall problem for various LSM-tree designs.
We first proposed a two-phase approach to use in evaluating
the impact of write stalls on percentile write latencies using a combination of closed and open system testing models.
We then {identified} and explored the design choices for LSM merge schedulers.
For full merges, we proposed a greedy scheduler that minimizes write stalls.
For partitioned merges, we found that a single-threaded scheduler is sufficient to provide a stable write throughput
but that the maximum write throughput must be measured {properly}.
Based on these findings, we have shown that performance variance must be considered together with write throughput
to ensure the actual usability of the measured throughput.

\noindent
\textbf{Acknowledgments.}
We thank Neal Young, Dongxu Zhao, and anonymous reviewers for their helpful feedback on the theorems in this paper.
This work has been supported by NSF awards CNS-1305430, IIS-1447720, IIS-1838248, and CNS-1925610 along with industrial support from Amazon, Google, and Microsoft and support from the Donald Bren Foundation (via a Bren Chair).

%\newpage
\balance

\bibliographystyle{abbrv}
\bibliography{../lsm-survey/lsm}

\longonly{	
\appendix
\label{appendix:proof}
\section{Proofs of Theorems}

\setcounter{theorem}{0}

\begin{theorem}
Given any data arrival process and any LSM-tree, processing writes as quickly as possible minimizes the latency of each write.
\end{theorem}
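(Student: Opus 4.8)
The plan is to recast the write process as a first-in-first-out single-server queue and then show, via a Lindley-type recursion, that any artificial delay can only push completion times later. First I would fix an arbitrary data arrival process and write $a_1 \le a_2 \le \cdots$ for the arrival times of the individual write requests $r_1, r_2, \ldots$; under a given scheduler I would let $c_i$ denote the completion time of $r_i$, so that its latency is $c_i - a_i$. I would formalize ``processing writes as quickly as possible'' as the work-conserving discipline $S'$ that admits a pending write into memory the instant the LSM-tree is able to accept it, and I would compare it against an arbitrary scheduler $S$ that is identical in every respect except that it may insert nonnegative artificial delays $d_i \ge 0$ before admitting $r_i$ (so $d_i = 0$ for $S'$).

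The key fact I would invoke is established earlier in \refsection{sec:lsm-merge-scheduler} and \refsection{sec:full-merge-interact-writes}: the per-write processing cost of an LSM-tree is determined by the LSM design and the workload alone, hence is \emph{independent of the scheduler}. This lets me assign each request $r_i$ a service requirement $s_i$ that is common to $S$ and $S'$. Since writes are admitted and processed in arrival order, the completion times obey a Lindley recursion; for $S$ it reads $c_i^{S} = \max\!\left(c_{i-1}^{S},\, a_i\right) + s_i + d_i$, and the same recursion holds for $S'$ with every $d_i = 0$.

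I would then prove $c_i^{S'} \le c_i^{S}$ for all $i$ by induction on $i$. The base case is immediate, and the inductive step uses monotonicity of $\max(\cdot,\cdot)$ together with $d_i \ge 0$: if $c_{i-1}^{S'} \le c_{i-1}^{S}$, then $\max(c_{i-1}^{S'}, a_i) \le \max(c_{i-1}^{S}, a_i)$, and adding the common $s_i$ and the nonnegative $d_i$ preserves the inequality. As $a_i$ is the same in both systems, this yields $c_i^{S'} - a_i \le c_i^{S} - a_i$, i.e., $S'$ minimizes the latency of each write individually; since $S$ was arbitrary, the theorem follows.

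The main obstacle I anticipate is rigorously justifying that the service requirement $s_i$ is genuinely scheduler-independent. Delaying admissions changes which records reside in memory at each flush, and hence the precise timing of flushes and merges, so I would need to argue that this reshuffling leaves the \emph{total} I/O work unchanged --- appealing to the fact that write amplification is fixed by the LSM shape and the key distribution rather than by merge timing --- and that a delay can only postpone, never advance, the moment at which the tree becomes able to admit a subsequent write. Establishing this monotonicity of admissibility carefully, for instance through a coupling that zeroes out $S$'s delays one at a time while checking that no completion time ever increases, is where the real content lies; the recursion itself is routine once $s_i$ is pinned down.
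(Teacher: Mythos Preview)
Your proposal is correct but takes a more elaborate route than the paper. The paper's proof works entirely at the aggregate level: it lets $W_T$ and $W'_T$ denote the cumulative number of writes processed by $S$ and $S'$ up to time $T$, asserts $W_T \le W'_T$ in one line from the work-conserving nature of $S'$, and then observes that the $i$-th write (in arrival order) must therefore complete under $S'$ no later than under $S$. There is no per-request service time, no recursion, and no induction; the whole argument is two sentences.

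By contrast, you introduce per-request service requirements $s_i$ and a Lindley recursion, and the obstacle you yourself flag --- that $s_i$ need not be scheduler-independent, because delaying admissions shifts flush and merge boundaries and hence where the blocking actually falls --- is real. It is exactly the difficulty the paper sidesteps by never defining $s_i$ at all: the aggregate inequality $W_T \le W'_T$ absorbs all of those modeling subtleties without resolving them, whereas your formulation forces you to confront them request by request. Your approach is sound and more transparent about where the content lies (your final paragraph essentially rediscovers the need for a coupling or monotonicity argument that the paper simply assumes), but the paper's route is shorter and avoids manufacturing a per-request service abstraction that LSM-trees do not naturally supply.
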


\begin{proof}
	Given an LSM-tree, consider two merge schedulers $S$ and $S'$ which only differ in that 
	$S$ may add arbitrary delays to writes to avoid write stalls while $S'$ processes writes as quickly as possible.
	Denote the total number of writes processed by $S$ and $S'$ at time instant $T$ as $W_T$ and $W'_T$ respectively.
	Since $S'$ processes writes as quickly as possible, we have $W_T \le W'_T$.
	In other words, given the same numbers of writes, $S'$ processes these writes no later than $S$.
	
	Consider the $i$-th write request that arrives at time instant $T_{a_i}$.
	Suppose this write is processed by $S$ and $S'$ at time instants $T_{p_i}$ and $T'_{p_i}$ respectively.
	Based on the analysis above, it is straightforward that $T_{p_i} \ge T'_{p_i}$.
	Thus, we have $T_{p_i} - T_{a_i} \ge T'_{p_i} - T_{a_i}$, which implies that $S'$ minimizes the latency of each write.
\end{proof}

\begin{theorem}
Given any set of merge operations that process the same number of disk components and any {I/O bandwidth} budget,
the greedy scheduler minimizes the number of disk components at any time instant.
\end{theorem}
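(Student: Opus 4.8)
The plan is to reduce this claim about disk-component counts to a statement about completion times of a single-server shortest-job-first schedule. First I would use the hypothesis that every merge operation in the given set consumes and produces the same number of disk components: this means each completed merge lowers the total component count by a fixed amount $\Delta$ that does not depend on \emph{which} merge completed. Holding the flush/arrival process fixed (and hence scheduler-independent, since flushes receive separate higher-priority bandwidth), the number of disk components present at any instant $t$ equals a scheduler-independent term minus $\Delta$ times the number of merges completed by $t$. Minimizing the component count at every instant is therefore equivalent to maximizing the number of \emph{completed} merges at every instant, so it suffices to prove that the greedy scheduler $S'$ completes at least as many merges as any scheduler $S$ at all times.

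Next I would set up the scheduling view. Under the standing assumption that a single merge can saturate the I/O bandwidth budget $B$, at most one merge runs at a time, so a scheduler just chooses an order (with possible idling or preemption) in which the $N$ merge byte-volumes $w_1,\dots,w_N$ are processed. Let $w_{(1)}\le\cdots\le w_{(N)}$ denote these volumes sorted in increasing order. On a static input the greedy rule always runs the merge with the fewest remaining bytes, which amounts to processing the merges shortest-first with no idling; hence its $i$-th completion occurs exactly at time $C^{S'}_i=\frac{1}{B}\sum_{k=1}^{i} w_{(k)}$.

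The key step is a lower-bound argument for an arbitrary scheduler $S$. If $S$ has completed $i$ merges by time $t$, then $S$ must have performed at least $\sum_{k=1}^{i} w_{(k)}$ bytes of work, because among any $i$ completed merges the $i$ smallest volumes give the least possible total. Since the total work performed by time $t$ cannot exceed $B t$, this yields $B t \ge \sum_{k=1}^{i} w_{(k)}$, i.e. $t \ge C^{S'}_i$. Thus whenever $S$ has finished $i$ merges, $S'$ has finished at least $i$ as well, so $S'$ completes at least as many merges as $S$ at every instant. Combined with the reduction of the first paragraph, this gives the theorem.

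I expect the main obstacle to be making the reduction airtight rather than the shortest-job-first argument itself. In particular I must justify carefully that the flush/arrival contribution to the component count is genuinely scheduler-independent, so that only the completed-merge term differs between $S$ and $S'$, and I must emphasize that the ``same number of disk components'' hypothesis is precisely what makes each completion count equally toward the reduction---without it, a shortest-first rule could finish more but lower-impact merges and fail to minimize the count. I would therefore state these assumptions explicitly and note that preemption in the greedy rule neither helps nor hurts on a static input, since the currently running merge always remains the one with the fewest remaining bytes until it finishes.
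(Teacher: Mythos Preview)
Your proposal is correct and follows essentially the same approach as the paper: both reduce the component-count claim to showing that the greedy scheduler's $i$-th completion time is no later than any other scheduler's, and both establish this by observing that greedy's first $i$ merges have minimum total byte-volume, so any scheduler completing $i$ merges must have done at least that much work under the same bandwidth budget. Your presentation is somewhat more explicit---you frame it directly as shortest-job-first and give a direct argument rather than the paper's proof by contradiction---but the mathematical core is identical.
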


\begin{proof}
	Let $S$ be an arbitrary merge scheduler and $S'$ be the greedy scheduler.
	Suppose there are $N$ merge operations in total and the initial time instant is $t_0$.
	Denote by $t_i$ and $t'_i$ the time instants when $S$ and $S'$ complete their $i$-th merge operation, respectively.
	Since all merge operations always process the same number of disk components,
	we only need to show that for any $i\in [1, N]$, $t_i\ge t'_i$ always holds. In other words, $S'$ completes each merge operation no later than $S$.
	
	Suppose there exists $i\in [1, N]$ s.t. $t_i < t'_i$.
	Denote by $|S_{\le i}|$ and $|S'_{\le i}|$ the total number of bytes read and written by $S$ and $S'$ up to the completion of the $i$-th merge operation.
	By the definition of the greedy scheduler $S'$, we have $|S_{\le i}|\ge |S'_{\le i}|$.
	Since $t_i < t'_i$, we further have $\frac{|S_{\le i}|}{t_i-t_0} > \frac{|S_{'\le i}|}{t'_i-t_0}$.
	This implies that the merge scheduler $S$ requires a larger {I/O bandwidth} budget than $S'$, which leads to a contradiction.
	Thus, for any $i\in [1, N]$, $t_i\le t'_i$ always holds, which proves that $S'$ minimizes the number of disk components over time.
\end{proof}

\begin{theorem}
Given any {I/O bandwidth} budget, no merge scheduler can minimize the number of disk components at any time instant for any data arrival process and any LSM-tree for a deterministic merge policy where all merge operations process the same number of disk components.
\end{theorem}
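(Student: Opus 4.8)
The plan is to prove the impossibility by constructing a single instance — one I/O bandwidth budget, one LSM-tree state together with a fixed arrival process, and one deterministic merge policy — for which no scheduler minimizes the number of disk components at every time instant. The first step is to reduce the quantity ``number of disk components'' to a purely combinatorial one. Because the merge policy is deterministic and every merge operation consumes the same number of components (so each completed merge reduces the component count by a fixed amount $k$), and because flushes proceed identically regardless of how merges are scheduled, at any instant $t$ the component count of any scheduler equals a scheduler-independent term (the initial count plus the flushes that have arrived by $t$) minus $k$ times the number of merges that scheduler has completed by $t$. Hence minimizing the component count at $t$ is equivalent to \emph{maximizing the number of completed merges} by $t$, and it suffices to show that no scheduler can maximize this count at every instant simultaneously.

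Second, I would fix the adversarial state from the proof sketch: at the initial instant $t_0$ the policy has created a small merge $M_S$ and a large merge $M_L$ (with $|M_L| > |M_S|$ bytes and $M_S$ the smallest merge available at $t_0$), and completing $M_L$ causes the deterministic policy to create a fresh merge $M'_S$ with $|M'_S| < |M_S|$. Let $B$ be the bandwidth. Define $S_1$ to run $M_S$ then $M_L$, and $S_2$ to run $M_L$ then $M'_S$. With full bandwidth devoted to the running merge, $S_1$ completes its first merge at $t_0 + |M_S|/B$, whereas no scheduler can complete a first merge earlier, since $M_S$ is the smallest merge at $t_0$ and $M'_S$ does not exist until $M_L$ is done. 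Meanwhile $S_2$ completes its second merge at $t_0 + (|M_L| + |M'_S|)/B$, and since $|M'_S| < |M_S|$ this is strictly earlier than $t_0 + (|M_S| + |M_L|)/B$, the earliest second-completion time available to any scheduler that finishes $M_S$ first.

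Third, I would derive the contradiction. Suppose some scheduler $S^*$ minimizes the component count — equivalently, maximizes the number of completed merges — at every instant. Evaluating just after $t_0 + |M_S|/B$, optimality forces $S^*$ to have already completed one merge; since $M_S$ is the only merge completable that early, $S^*$ must have devoted full bandwidth to $M_S$ and finished it first, landing in exactly the state of $S_1$. But then $S^*$'s second reduction cannot occur before $t_0 + (|M_S| + |M_L|)/B$, whereas $S_2$ has completed two merges by the strictly earlier instant $t_0 + (|M_L| + |M'_S|)/B$. Evaluating the count just after that instant, $S_2$ has more completed merges, i.e.\ strictly fewer components, than $S^*$, contradicting optimality. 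Symmetrically, any scheduler matching $S_2$ late is beaten by $S_1$ early, so no scheduler dominates at all instants.

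I expect the main obstacle to be the realizability of the adversarial state under a genuinely deterministic merge policy: I must exhibit, or rigorously argue the existence of, an LSM configuration and arrival process in which completing the large merge $M_L$ triggers a merge strictly smaller than the already-pending $M_S$, while ensuring no stray small merge arises from flushes in the relevant window. I would build this from the inherent variance of leveling, where the cost of a merge at a level ranges from $0$ to $\frac{M \cdot T^i}{B}$ as the level-$i$ component grows, arranging $M_L$ so that its completion fills a level just enough to enable a near-empty (hence tiny) merge $M'_S$ while $M_S$ remains moderately sized. A secondary point needing care is the ``full-bandwidth commitment'' claim — that the earliest possible completion of any merge requires dedicating the entire budget to it — which holds because a merge reduces the count only upon completion, so splitting bandwidth can only delay every first completion.
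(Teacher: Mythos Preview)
Your proposal is correct and follows essentially the same approach as the paper's proof: both exhibit an initial state with a small merge $M_S$ and a large merge $M_L$ such that completing $M_L$ unlocks an even smaller merge $M'_S$, then argue that any uniformly optimal scheduler would need to match the earliest first-completion time (forcing $M_S$ first) and also the earliest second-completion time (forcing $M_L$ first), which is impossible. The only notable difference is in the concrete realization you flag as the ``main obstacle'': you propose to manufacture the size disparity via leveling's variable merge cost, whereas the paper instantiates the example with a two-level \emph{tiering} policy of size ratio~2 and obtains $|M'_S| < |M_S|$ by having one component contain many deleted keys, so that the post-merge component is tiny; the paper's choice sidesteps the need to reason about flush arrivals by taking an empty arrival process.
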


\begin{proof}
	In this proof, we will construct an example showing that no such merge scheduler can be designed.
	Consider a two-level LSM-tree with a tiering merge policy. The size ratio of this merge policy is set at 2.
	Suppose Level $1$, which is the last level, contains three disk components $D_1$, $D_2$, $D_3$
	and {Level} $0$ contains two disk components, $D_4$ and $D_5$.
	For simplicity, assume that no more writes will arrive.
	Initially, the merge policy creates two merge operations, namely the merge operation $M_{1-2}$ that processes $D_1$ and $D_2$
	and the merge operation $M_{4-5}$ that processes $D_4$ and $D_5$.
	Upon the completion of $M_{1-2}$, which produces a new disk component $D_{1-2}$, the merge policy will create a new merge operation $M_{1-3}$ that processes $D_{1-2}$ and $D_3$.
	We further denote the amount of {I/O bandwidth} required by each merge operation $M_{1-2}$, $M_{4-5}$, and $M_{1-3}$ as $|M_{1-2}|$, $|M_{4-5}|$, and $|M_{1-3}|$.
	Finally, we assume that $|M_{1-3}| < |M_{4-5}| < |M_{1-2}|$. This can happen if $D_2$ contains a large number of deleted keys against $D_1$ so that the merged disk component $D_{1-2}$ is very small.
	
	Suppose that the initial time instant is $t_0$ and let the given {I/O bandwidth} budget be $B$.
	Consider a merge scheduler $S$ that first executes $M_{4-5}$ and then $M_{1-2}$.
	At time instant $t_1 = t_0 + \frac{|M_{4-5}|}{B}$, $S$ completes $M_{4-5}$ and reduces the number of disk components by $1$.
	At time instant $t_2 = t_0 + \frac{|M_{4-5}|}{B} + \frac{|M_{1-2}|}{B}$, $S$ completes $M_{1-2}$.
	Consider another merge scheduler $S'$ that first executes $M_{1-2}$.
	At time instant $t'_1 = t_0+ \frac{|M_{1-2}|}{B}$, $S'$ completes $M_{1-2}$.
	Now the merge policy creates a new merge operation $M_{1-3}$, which is then executed by $S'$.
	At time instant $t'_2 = t_0 +  \frac{|M_{1-2}|}{B} + \frac{|M_{1-3}|}{B}$, $S'$ completes $M_{1-3}$.
	Based on the assumption $|M_{1-3}| < |M_{4-5}| < |M_{1-2}|$, it follows that $t_1 < t'_1$ and $t'_2 < t_2$.
	Suppose there exists a merge scheduler $S^*$ that minimizes the number of disk components over time.
	Then, $S^*$ must satisfy the following two constraints:
	(1) complete one merge operation no later than $t_1$;
	(2) complete two merge operations no later than $t'_2$.
	
	To satisfy constraint (1), $S^*$ must execute $M_{4-5}$ first.
	Then, $S^*$ must complete the second merge operation within time interval $t'_2 - t_1 =  \frac{|M_{1-2}|}{B} + \frac{|M_{1-3}|}{B} -  \frac{|M_{4-5}|}{B}$.
	Since $|M_{1-3}| < |M_{4-5}|$, we have $t'_2 - t_1 < \frac{|M_{1-2}|}{B}$.
	Thus, $S^*$ cannot satisfy constraint (2) by completing the second merge operation no later than $t'_2$
	because the only remaining merge operation $M_{1-2}$ takes time $\frac{|M_{1-2}|}{B}$ to finish.
	This leads to a contradiction that $S^*$ minimizes the number of disk components over time.
	Thus, we have constructed an example for which no such merge scheduler can be designed, which proves the theorem.
\end{proof}
}
	
\end{document}